 \newtheorem{thm}{Theorem}[section]
 \newtheorem{cor}[thm]{Corollary}
 \newtheorem{lem}[thm]{Lemma}
 \newtheorem{prop}[thm]{Proposition}
 \theoremstyle{definition}
 \newtheorem{defn}[thm]{Definition}
 \theoremstyle{remark}
 \newtheorem{rem}[thm]{Remark}
 \numberwithin{equation}{section}
\begin{document}

%
%
%
%
%
%
%
%
%

\title{Transfer and entanglement stability of property ($UW${\normalsize\it{E}})}

\author{Sinan Qiu\textsuperscript{1}}
\address{1 School of Mathematics and Statistics,
  Beijing Institute of Technology,
  Beijing 100081, People's Republic of China}
\email{qiusinan111@163.com}

\author{Lining Jiang\textsuperscript{1${\ast}$}}
\thanks{*Corresponding author}
\address{1 School of Mathematics and Statistics,
  Beijing Institute of Technology,
  Beijing 100081, People's Republic of China}
\email{jianglining@bit.edu.cn}
\subjclass{47A53; 47A10; 47A55}

\keywords{Property ($UW${\scriptsize \it{E}}); CI spectrum; Functions of operators; Operator matrices}

\date{January 1, 2004}
\dedicatory{}
\thanks{This research did not receive any specific grant from funding agencies in the public, commercial, or not-for-profit sectors.}


\begin{abstract}
An operator $T\in B(H)$ is said to satisfy property ($UW${\scriptsize \it{E}}) if the complement in the approximate point spectrum of the essential approximate point spectrum
coincides with the isolated eigenvalues of the spectrum.
Via the CI spectrum induced by consistent invertibility property of operators,
we explore property ($UW${\scriptsize \it{E}}) for $T$ and $T^\ast$ simultaneously. Furthermore, the transfer of property ($UW${\scriptsize \it{E}}) from $T$ to $f(T)$ and $f(T^{\ast})$ is obtained, where $f$ is a function which is analytic in a neighborhood of the spectrum of $T$. At last, with the help of the so-called $(A,B)$ entanglement stable spectra, the entanglement stability of property ($UW${\scriptsize \it{E}}) for $2\times 2$ upper triangular operator matrices is investigated.

\end{abstract}

\maketitle
\section{Introduction}

H. Weyl in his celebrated paper \cite{ai} shows that 
$$\bigcap\limits_{K\in K(H)}\sigma(T+K)=\sigma(T)\setminus\{\lambda\in \hbox{iso}\sigma(T): 0<\hbox{dim}N(T-\lambda I)<\infty\},$$
where $\sigma(T)$ and $K(H)$ denote the spectrum of self-adjoint operator $T$ and the set of all compact operators acting on $H$, respectively.
The observation was abstracted into the assertion ``Weyl's theorem holds".
Later, according to Schechter's investigation \cite{cm},
$$\sigma_{w}(T)=\bigcap\limits_{K\in K(H)}\sigma(T+K),$$
where $\sigma_{w}(T)$ denotes the Weyl spectrum of $T$.
Weyl's theorem has thereby evolved into 
$$\sigma(T)\setminus\sigma_{w}(T)=\pi_{00}(T),$$
where $\pi_{00}(T)=\{\lambda\in \hbox{iso}\sigma(T): 0<\hbox{dim} N(T-\lambda I)<\infty\}$.
 Since then, Weyl's theorem has been gradually extended from initial self-adjoint operators to more general operators \cite{ca,cb,cd}. Additionally, there appeared some variants of Weyl's theorem, such as Browder's theorem, a-Weyl's theorem, property ($R$) and so on(see \cite{ce, ba, bb}).  
 As an important part of spectral theory, Weyl type theorem, the general term of Weyl's theorem and its variants, has attracted much attention of scholars.

 In the last two decades, the research on Weyl type theorem has been further enriched. The stability of Weyl type theorem under some perturbations (see\cite{ay, db}), the transfer of Weyl type theorem from operators to functions of operators (see\cite{ay, dc, dd}) and the entanglement stability of Weyl type theorem for operator matrices (see\cite{af, am, de, df, dg}) have drawn many scholars. Also, new Weyl type theorem has appeared one after another.
In this paper, we are interested in a variant of Weyl's theorem, which is called property ($UW${\scriptsize \it{E}}) (see \cite{aw}).

It's known to all that Weyl type theorem is closely related to some local spectral properties.
As a kind of local spectral properties, 
the concept ``consistent in invertibility" originates from the further exploration of Jacobson's Theorem. We say $T\in B(H)$ is consistent in invertibility (abbrev. CI) if, for any $S\in B(H)$, $TS$ admits the same invertibility as $ST$. Gong and Han gave a characterization for such operators (see\cite{ad}).
Djordjević extended the concept from Hilbert spaces to Banach spaces and further studied the consistency in ``weak invertibility”, such as consistent in Fredholm, consistent in Browder, and so on (see\cite{ab}). After that, Cao defined the CI spectrum to describe the conditions for which various forms of “Weyl’s theorem” hold (see\cite{aa}).
Since then, many wonderful studies have been carried out around CI spectrum and Weyl type theorems. Via such a spectrum, Ren, for instance, explored the stability of property $(R)$ under commuting power finite rank perturbations (see\cite{ap}).

Inspired by the property of CI spectrum, the paper devote to exploring how the property ($UW${\scriptsize \it{E}}) survives simultaneously for $T$ and $T^\ast$ in $B(H)$. In addition, the transfer of property ($UW${\scriptsize \it{E}}) from $T$ to $f(T)$ and $f(T^{\ast})$ simultaneously is considered.
Another aim of the paper is to probe the entanglement stability of property ($UW${\scriptsize \it{E}}) for $2\times2$ upper triangular operator matrices with the help of the so-called entanglement stable spectra. 

\section{Preliminaries}

Throughout this paper, ${\mathbb C}$ and ${\Bbb N}$ denote the set of complex numbers and the set of nonnegative integers, respectively.
 Let $H$ and $K$ be infinite dimensional separable complex Hilbert spaces and $B(H,K)$ be the set of bounded linear operators from $H$ to $K$. By convention we write $B(H)$ for $B(H,H)$.
For $T\in B(H)$, let $n(T)$, $d(T)$, $\sigma(T)$ and $\sigma_p(T)$ represent the dimension of the kernel $N(T)$, the codimension of the range $R(T)$, the spectrum and the set of eigenvalues of $T$ respectively.
Let $\Phi_+(H)=\{T\in B(H):$ $n(T)<\infty$ and $R(T)$ is closed$\}$ and $\Phi_-(H)=\{T\in B(H):$ $d(T)<\infty$$\}.$ If $T\in \Phi_+(H)$ (resp. $T\in \Phi_-(H)$ ), $T$ is said to be an upper (resp. lower)  semi-Fredholm operator.
The upper (resp. lower) semi-Fredholm spectrum $\sigma_{SF_+}(T)$ (resp. $\sigma_{SF_-}(T)$) of $T$ is a set of $\lambda\in\mathbb C$ such that $T-\lambda I\notin \Phi_+(H)$ (resp. $T-\lambda I\notin \Phi_-(H)$).
Furthermore, we call $T$ a semi-Fredholm operator
if $T\in \Phi_+(H) \cup\Phi_-(H)$, and the semi-Fredholm spectrum is defined by $\sigma_{SF}(T)=\sigma_{SF_+}(T)\cap\sigma_{SF_-}(T)$.
The index of $T$ is defined as $\hbox{ind}T=n(T)-d(T)$ for $T\in \Phi_+(H) \cup\Phi_-(H)$.
We call $T$ a bounded below operator if $T\in \Phi_+(H)$ and $n(T)=0$.
The approximate point spectrum and the essential approximate point spectrum of $T\in B(H)$ are defined respectively by
$$\sigma_{a}(T)=\{\lambda\in{\Bbb C}: T-\lambda I \hbox{ is not bounded below}\},$$
$$\sigma_{ea}(T)=\{\lambda\in{\Bbb C}: T-\lambda I\notin\Phi_+(H)  \hbox{ or ind}(T-\lambda I)>0\}.$$
For the sake of simplicity,  we write $\sigma_{ SF_{-}^{+}}(T)=\{\lambda\in{\Bbb C}: T-\lambda I\notin\Phi_-(H)  \hbox{ or }\\\hbox{ind}(T-\lambda I)<0\}.$
$T$ is called a Fredholm operator if $T\in\Phi_+(H) \cap\Phi_-(H)$.
If $T$ is Fredholm with $\hbox{ind} T=0$, we call $T$ a Weyl operator, and the Weyl spectrum $\sigma_w(T)$ of $T$ is defined as the set of $\lambda\in\mathbb C$ such that $T-\lambda I$ is not a Weyl operator.

The ascent (resp. descent) of $T$ is defined as $\hbox{asc}(T)=\inf\{n\in{\Bbb N}: N(T^{n})=N(T^{n+1})\}$ (resp. $\hbox{des}(T)=\inf\{n\in{\Bbb N}: R(T^{n})=R(T^{n+1})\}$).
If the infimum does not exist, we write $\hbox{asc}(T)=\infty$ (resp. $\hbox{des}(T)=\infty$).
If $T\in \Phi_+(H)$ with $\hbox{asc}(T)<\infty$, we call $T$ an upper semi-Browder operator.
$T$ is said to be a Browder operator if it is Fredholm of finite ascent and descent.
The upper semi-Browder spectrum $\sigma_{ab}(T)$ (resp. the Browder spectrum $\sigma_b(T)$) is the set of $\lambda\in\mathbb C$ such that $T-\lambda I$ is not an upper semi-Browder operator (resp. a Browder operator).

Set $\rho(T)={\Bbb C}\backslash\sigma(T)$ and $\rho_{\ast}(T)={\Bbb C}\backslash\sigma_{\ast}(T)$,
where  $\sigma_{\ast}\in\{ \sigma_{SF_+},  \sigma_{SF_-}, \sigma_{SF},\\ \sigma_{a}, \sigma_{ea}, \sigma_{ SF_{-}^{+}}, \sigma_{w}, \sigma_{ab}, \sigma_{b}\}$.
Also, write $\Pi(T)=\sigma(T)\backslash\sigma_p(T)$ and $\Pi_{x}(T)=\sigma_{x}(T)\backslash\sigma_p(T)$, where
$\sigma_{x}\in\{  \sigma_{a}, \sigma_{G}\}$ and $\sigma_G(T)=\{\lambda\in\Bbb C: R(T-\lambda I)$ is not closed$\}$.

For a subset $E \subseteq{\Bbb C}$, we denote by $\hbox{iso}E$, $\partial E$ and $\hbox{acc} E$ the set of isolated points, boundary points and accumulative points of $E$ respectively.
Then denote $E(T)=\hbox{iso}\sigma(T)\cap\sigma_p(T)$, namely, the set of the whole isolated eigenvalues of $T$.

Let's recall the concepts used throughout the paper in turn.

\begin{defn}\cite[Definition 3.1]{aw} 
An operator $T\in B(H)$ is said to satisfy property ($UW${\scriptsize \it{E}}), denoted by $T\in$($UW${\scriptsize \it{E}}), if$$\sigma_{a}(T)\setminus\sigma_{ea}(T)=E(T).$$
\end{defn}

 One can see that property ($UW${\scriptsize \it{E}}) can
reason out Weyl's theorem, and furthermore a-Weyl's theorem; whereas, as mentioned in \cite{af}, property ($UW${\scriptsize \it{E}}) and a-Weyl's theorem cannot reason from each other.
For $T\in B(H)$,  $T^{\ast}$ does not necessarily possess ($UW${\scriptsize \it{E}}) provided that $T\in$($UW${\scriptsize \it{E}}). Indeed, set $T_1\in B(\ell^2)$ be defined by:
$T_1(x_1, x_2, x_3, \cdots)=(0,x_1,\frac{x_2}{2}, \frac{x_3}{3}, \cdots)$. Note that $n(T_1)=0$ whereas $n({T_1}^{\ast})=1$, which shows that $T_1\in$($UW${\scriptsize \it{E}}) but ${T_1}^{\ast}\notin$($UW${\scriptsize \it{E}}). By means of the CI spectrum given below, this paper will give a necessary and sufficient condition that property ($UW${\scriptsize \it{E}}) holds for $T$ and $T^{\ast}$ simultaneously.

\begin{defn}\cite{ad} 
An operator $T\in B(H)$ is said to be consistent in invertibility (abbrev. CI), if $ST$ and $TS$ are either both or neither invertible for any $S\in B(H)$.
\end{defn}

A conclusion critical to CI operators was given in \cite{ad} as follows.

\begin{lem}
\cite[Theorem 1.1]{ad} Let $T\in B(H)$. Then $T$ is a CI operator if and only if one of the following three mutually disjoint cases occurs:

 $1)$\  $T$ is invertible;
 
$2)$\  $R(T)$ is not closed;

$3)$\  $N(T)\neq\{0\}$ and $R(T)=\overline{R(T)}\neq H$.
\end{lem}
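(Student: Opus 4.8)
The plan is to prove the equivalence in two directions, after first recording that the three listed cases are manifestly mutually disjoint: case $1)$ forces $N(T)=\{0\}$ together with $R(T)=H$, case $2)$ forces $R(T)$ non-closed, and case $3)$ forces $R(T)$ closed and proper with $N(T)\neq\{0\}$, so no two can hold together. The engine of the whole argument is the elementary observation that if $ST$ is invertible then $T$ is bounded below (whence $R(T)$ is closed and $N(T)=\{0\}$), while if $TS$ is invertible then $T$ is surjective (whence $R(T)=H$); the first follows from $\|STx\|\leq\|S\|\,\|Tx\|$ compared against the lower bound $\|STx\|\geq c\|x\|$, and the second from $R(T)\supseteq R(TS)=H$.

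For the ``if'' direction I would treat the three cases in turn. If $T$ is invertible, then for every $S$ both $TS$ and $ST$ are invertible exactly when $S$ is, so the two share the same invertibility and $T$ is CI. If $R(T)$ is not closed, then by the observation above neither $TS$ nor $ST$ can ever be invertible, so the CI condition holds vacuously. If $N(T)\neq\{0\}$ while $R(T)$ is closed but proper, then $TS$ is never invertible since $R(TS)\subseteq R(T)\neq H$, and $ST$ is never invertible since $N(ST)\supseteq N(T)\neq\{0\}$; hence $T$ is again CI.

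For the ``only if'' direction I would argue contrapositively, ruling out the two configurations with closed range that are absent from $1),2),3)$. Suppose first that $T$ is bounded below but $R(T)\neq H$. Then $T$ admits a bounded left inverse $S$ with $ST=I$, so $TS$ is idempotent via $(TS)^2=T(ST)S=TS$; since $R(TS)\subseteq R(T)\neq H$ we have $TS\neq I$, so $TS$ is not invertible while $ST$ is, and $T$ is not CI. Dually, suppose $T$ is surjective but $N(T)\neq\{0\}$. Then the open mapping theorem supplies a bounded right inverse $S$ with $TS=I$, the product $ST$ is idempotent via $(ST)^2=S(TS)T=ST$, and $N(ST)\supseteq N(T)\neq\{0\}$ forces $ST\neq I$, so $ST$ is not invertible while $TS$ is. Since any closed-range $T$ is exactly one of: invertible; bounded below but not onto; onto but not injective; or $N(T)\neq\{0\}$ with $R(T)\neq H$, a CI operator with closed range must be in case $1)$ or case $3)$, while non-closed range is precisely case $2)$.

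I expect the only delicate point to be the construction of the one-sided inverses: for a bounded-below $T$ one inverts the isomorphism $T\colon H\to R(T)$ and extends by zero across a complement of the closed range $R(T)$ to obtain $S$ with $ST=I$, while for a surjective $T$ the right inverse comes from the open mapping theorem. The idempotent identities above are the crux, as they instantly obstruct invertibility on the offending side and thereby furnish the witness $S$ that breaks consistency.
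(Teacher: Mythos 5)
Your proof is correct and complete. Note that the paper itself offers no proof of this statement: it is quoted verbatim as Theorem 1.1 of Gong and Han's paper \cite{ad}, so there is no internal argument to compare against. Your route --- the bounded-below/surjectivity observation for products, the vacuous verification of CI in cases $2)$ and $3)$, and the contrapositive step that manufactures a one-sided inverse $S$ and uses the idempotent identities $(TS)^2=TS$, $(ST)^2=ST$ to break invertibility on exactly one side --- is the standard proof of this characterization and is essentially the argument given in the cited source. The only point worth flagging is that the existence of a bounded right inverse for a surjection uses that $N(T)$ is complemented, which is automatic in the Hilbert space setting here (invert $T$ restricted to $N(T)^{\perp}$); your appeal to the open mapping theorem is the right ingredient for that restriction.
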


By the set 
$$\sigma_{CI}(T)=\{\lambda\in{\Bbb C}: T-\lambda I\hbox{ is not CI}\},$$ 
the CI spectrum of $T$ is denoted. One can see that $\lambda\in \sigma_{CI}(T)$ if and only if $T-\lambda I$ is either bounded below but not invertible or surjective but not invertible. Obviously, $\sigma_{CI}(T)$ is an open subset of the spectrum of $T$ (see \cite{aa}).
For more conclusions concerning CI, one can refer to \cite{ae, ac, az}.

For $T\in B(H)$, if $K$ is a clopen subset of $\sigma(T)$, there is an analytic Cauchy domain $\Omega$ such that $K\subseteq\Omega$ and $[\sigma(T)\setminus K]\cap\overline{\Omega}=\emptyset$, where $\overline{\Omega}$ is the closure of $\Omega$. Let $E(K;T)$ denote the associated spectral idempotent 
of $T$ corresponding to $K$, i.e.,
$$E(K;T)=\frac{1}{2\pi \hbox{i}}\int\limits_{\Gamma}(\lambda I-T)^{-1}d\lambda,$$
where $\Gamma=\partial\Omega$ is positively oriented with respect to $\Omega$.
Denote $H(K;T)=R(E(K;T))$, and write $H(\lambda;T)$ instead of $H(\{\lambda\};T)$ provided that $\lambda\in\hbox{iso}\sigma(T)$. It follows from
\cite[Theorem 3.6]{bd} that $\lambda$ is a Riesz point of $T$, denoted by $\lambda\in\sigma_{0}(T)$, when $\dim H(\lambda;T)<\infty$. Thereinto, $\sigma_{0}(T)$ is the set of Riesz points of $T$ which are in $\sigma(T)$. From \cite[Proposition 2]{bc}, one can see $\sigma_{0}(T)=\sigma(T)\setminus\sigma_{b}(T)$.

\section{ Property ($UW_E$) for operator and its conjugate}

This section aims to describe property ($UW${\scriptsize \it{E}}) for $T\in B(H)$ and its conjugate $T^*$ simultaneously in the light of CI spectrum.
Let's begin with the discussion of property ($UW${\scriptsize \it{E}}) for $T\in B(H)$.

\begin{prop}
$T\in$($UW${\scriptsize \it{E}}) if and only if
 $\sigma_{b}(T)=\sigma_{CI}(T)\cup \Pi_G(T)\cup[\hbox{acc}\sigma(T)\cap\sigma_{ea}(T)]$.
\end{prop}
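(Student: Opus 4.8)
The plan is to exploit Lemma 2.3 to show that one inclusion of the asserted identity is automatic, and then to reduce the proposition to a single inclusion governed by property ($UW${\scriptsize \it{E}}). Write $M:=\sigma_{CI}(T)\cup\Pi_G(T)\cup[\mathrm{acc}\,\sigma(T)\cap\sigma_{ea}(T)]$ for the right-hand side. First I would record the pointwise reading of Lemma 2.3: $\lambda\notin\sigma_{CI}(T)$ exactly when $T-\lambda I$ is invertible, or $R(T-\lambda I)$ is not closed, or $N(T-\lambda I)\neq\{0\}$ with $R(T-\lambda I)$ closed and proper. Using this together with the elementary facts that a bounded below (or surjective) Browder operator is invertible and that non-closed range precludes Fredholmness, I would prove $\sigma_{CI}(T)\subseteq\sigma_b(T)$ and $\Pi_G(T)\subseteq\sigma_b(T)$; combined with $\mathrm{acc}\,\sigma(T)\subseteq\sigma_b(T)$ (Riesz points are isolated, since $\sigma_0(T)=\sigma(T)\setminus\sigma_b(T)$), this gives $M\subseteq\sigma_b(T)$ unconditionally. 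Hence the proposition reduces to the equivalence of $(UW${\scriptsize \it{E}}$)$ with the single inclusion $\sigma_b(T)\subseteq M$.

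Two structural facts about an isolated spectral point $\lambda\in\mathrm{iso}\,\sigma(T)$ would then be isolated and reused: (F2) $T-\lambda I$ is never surjective, because on the spectral subspace $H(\lambda;T)$ the operator $T-\lambda I$ is quasinilpotent and nonzero; and (F3) if moreover $T-\lambda I\in\Phi_+(H)$, then, by constancy of the index on the punctured disk about $\lambda$ on which $T-\mu I$ is invertible, $\mathrm{ind}(T-\lambda I)=0$, $T-\lambda I$ is Fredholm, and $\lambda$ is a Riesz point, so $\lambda\notin\sigma_b(T)$.

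For the forward implication I would assume $(UW${\scriptsize \it{E}}$)$, take $\lambda\in\sigma_b(T)$, and separate $\lambda\in\mathrm{acc}\,\sigma(T)$ from $\lambda\in\mathrm{iso}\,\sigma(T)$. If $\lambda\in\mathrm{acc}\,\sigma(T)\cap\sigma_{ea}(T)$ then $\lambda\in M$ directly; if instead $\lambda\in\mathrm{acc}\,\sigma(T)$ with $\lambda\notin\sigma_{ea}(T)$, then $T-\lambda I\in\Phi_+(H)$ with $\mathrm{ind}\le0$, and $n(T-\lambda I)>0$ would force $\lambda\in\sigma_a(T)\setminus\sigma_{ea}(T)=E(T)\subseteq\mathrm{iso}\,\sigma(T)$, contradicting $\lambda\in\mathrm{acc}\,\sigma(T)$, so $n(T-\lambda I)=0$ and $T-\lambda I$ is bounded below but not surjective, whence $\lambda\in\sigma_{CI}(T)\subseteq M$. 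If $\lambda\in\mathrm{iso}\,\sigma(T)$, then $\lambda\in E(T)$ is impossible, since by $(UW${\scriptsize \it{E}}$)$ it would give $T-\lambda I\in\Phi_+(H)$ and then (F3) $\lambda\notin\sigma_b(T)$; thus $N(T-\lambda I)=\{0\}$, and according as $R(T-\lambda I)$ is non-closed or closed one obtains $\lambda\in\Pi_G(T)$ or (using non-invertibility) $\lambda\in\sigma_{CI}(T)$, so $\lambda\in M$. For the converse I would assume the identity and verify $\sigma_a(T)\setminus\sigma_{ea}(T)=E(T)$ by two inclusions: if $\lambda\in\sigma_a(T)\setminus\sigma_{ea}(T)$ then $T-\lambda I\in\Phi_+(H)$, $n(T-\lambda I)>0$, $\mathrm{ind}\le0$, and were $\lambda\in\mathrm{acc}\,\sigma(T)\subseteq\sigma_b(T)=M$ the CI trichotomy would place $\lambda\notin\sigma_{CI}(T)\cup\Pi_G(T)$ while $\lambda\notin\sigma_{ea}(T)$ excludes the third set, contradicting $\lambda\in M$; hence $\lambda\in\mathrm{iso}\,\sigma(T)\cap\sigma_p(T)=E(T)$. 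Conversely, for $\lambda\in E(T)$, the conditions $N(T-\lambda I)\neq\{0\}$ and (F2) rule out both $\Pi_G(T)$ and $\sigma_{CI}(T)$, while isolation removes the third set, so $\lambda\notin M=\sigma_b(T)$, making $\lambda$ a Riesz point, hence Fredholm of index zero, i.e. $\lambda\in\sigma_a(T)\setminus\sigma_{ea}(T)$.

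The delicate points, where I expect to spend the most care, are the two structural facts (F2)--(F3) about isolated spectral points and the exhaustive bookkeeping of the CI trichotomy of Lemma 2.3 case by case; once these are in hand, each inclusion is a short deduction.
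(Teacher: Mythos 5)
Your proposal is correct and follows essentially the same route as the paper: both directions rest on Lemma 2.3's trichotomy, the observation that $\sigma_{CI}(T)\cup\Pi_G(T)\cup[\mathrm{acc}\,\sigma(T)\cap\sigma_{ea}(T)]$ is disjoint from $[\sigma_a(T)\setminus\sigma_{ea}(T)]\cup E(T)$ (so that the latter lands in $\sigma_0(T)$), and the standard facts about isolated spectral points that you isolate as (F2)--(F3). Your write-up is merely a more explicit, case-by-case rendering of the paper's argument (the paper phrases your forward implication as a contrapositive and dismisses $M\subseteq\sigma_b(T)$ as ``clear'').
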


\begin{proof}
For the sufficiency, since the set $\big\{ [\sigma_{a}(T)\setminus\sigma_{ea}(T)]\cup E(T)\big\}\cap\big \{\sigma_{CI}(T)\cup \Pi_G(T)\cup[\hbox{acc}\sigma(T)\cap\sigma_{ea}(T)]\big\}=\emptyset$, we have $[\sigma_{a}(T)\setminus\sigma_{ea}(T)]\cup E(T)\subseteq\sigma_{0}(T)$. Then
$\sigma_{a}(T)\setminus\sigma_{ea}(T)=E(T)$, and $T\in$($UW${\scriptsize \it{E}}).

 In contrast, let $\lambda_0\notin\sigma_{CI}(T)\cup \Pi_G(T)\cup[\hbox{acc}\sigma(T)\cap\sigma_{ea}(T)]$,
 then $\lambda_0\in\rho_{CI}(T)$ and we have either $R(T-\lambda_0 I)$ is closed or $n(T-\lambda_0 I)>0$.
 In view of Lemma 2.3, there are only two cases should be considered. One is that $T-\lambda_0 I$ is invertible, the other is $n(T-\lambda_0 I)>0$. Without loss of generality, we assume that $n(T-\lambda_0 I)>0$. Note that $\lambda_0\notin\hbox{acc}\sigma(T)\cap\sigma_{ea}(T)$. So either $\lambda_0\in E(T)$ or $\lambda_0\in\sigma_{a}(T)\setminus\sigma_{ea}(T)$. It follows from $T\in$($UW${\scriptsize \it{E}}) that $\lambda_0\notin\sigma_b(T)$.
 The converse inclusion is clear.
\end{proof}

 Similarly, we can show that $T\in$($UW${\scriptsize \it{E}}) if and only if
 $\sigma_{b}(T)=\sigma_{CI}(T)\cup \Pi_a(T)\cup[\hbox{acc}\sigma(T)\cap\sigma_{ea}(T)]$.

 Considering the fact that $\sigma_{CI}(T)$ is open, it is natural to consider whether property ($UW${\scriptsize \it{E}}) can be described by $\overline{\sigma_{CI}(T)}$. The answer is positive with the help of Kato spectrum.
 Here we say $T\in B(H)$ is Kato if $N(T)\subseteq\bigcap\limits_{n=1}^\infty R(T^n)$, and call 
$\sigma_K(T)=\{\lambda\in{\Bbb C}: T-\lambda I$ is not Kato$\}$ the Kato spectrum of $T$.

\begin{cor} The following statements are equivalent.

$1)$\ $T\in$($UW${\scriptsize \it{E}});

$2)$\ $\sigma_{b}(T)=[\overline{\sigma_{CI}(T)}\cap\sigma_{ea}(T)]\cup \Pi(T)\cup\hbox{acc}\sigma_{ea}(T)\cup\hbox{acc}\sigma_K(T)$;

$3)$\ $\sigma_{b}(T)=[\overline{\sigma_{CI}(T)}\cap\sigma_{G}(T)]\cup \Pi(T)\cup\hbox{acc}\sigma_{ea}(T)\cup\hbox{acc}\sigma_K(T)\cup\{\lambda\in\hbox{acc}\sigma(T): n(T-\lambda I)=\infty\}$.

\end{cor}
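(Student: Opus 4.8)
The plan is to reduce everything to Proposition 3.1 together with its $\Pi_a(T)$-variant stated immediately afterwards: both assert that $(UW_E)$ is equivalent to $\sigma_b(T)$ coinciding with a fixed set, which I will abbreviate $R_0=\sigma_{CI}(T)\cup\Pi_G(T)\cup[\hbox{acc}\,\sigma(T)\cap\sigma_{ea}(T)]$ (or the same with $\Pi_a(T)$). Reading the proof of Proposition 3.1, one sees that the inclusion $R_0\subseteq\sigma_b(T)$ holds unconditionally; checking this term by term (via $\sigma_{ea}(T)\subseteq\sigma_b(T)$, $\Pi(T)\subseteq\sigma_b(T)$, and $\sigma_{CI}(T)\subseteq\sigma_b(T)$) is routine. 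Hence $(UW_E)$ is in fact equivalent to the single inclusion $\sigma_b(T)\subseteq R_0$. Write $R_2$ and $R_3$ for the right-hand sides of 2) and 3).

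First I would establish the two \emph{unconditional} inclusions $R_2\subseteq R_0$ and $R_3\subseteq R_0$. Granting these, the implications 2)$\Rightarrow$1) and 3)$\Rightarrow$1) are immediate: if $\sigma_b(T)=R_j$ for $j\in\{2,3\}$, then $\sigma_b(T)=R_j\subseteq R_0\subseteq\sigma_b(T)$ forces $R_0=\sigma_b(T)$, hence $(UW_E)$ by Proposition 3.1. To prove $R_2\subseteq R_0$ I would treat the four constituent sets separately, using Lemma 2.3 to split $\sigma_{CI}(T)$ into bounded-below-but-not-invertible points (trivial kernel, so in $\Pi(T)\subseteq\sigma_{CI}(T)\cup\Pi_G(T)$) and surjective-but-not-invertible points (positive index, so in $\sigma_{ea}(T)$). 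The terms $\Pi(T)$ and $\hbox{acc}\,\sigma_{ea}(T)$ land in $R_0$ by the same case analysis on the range together with closedness of $\sigma_{ea}(T)$, while a boundary point of the \emph{open} set $\sigma_{CI}(T)$ is an accumulation point of $\sigma(T)$, which handles $\overline{\sigma_{CI}(T)}\cap\sigma_{ea}(T)$. The one delicate term is $\hbox{acc}\,\sigma_K(T)$: here I would invoke the punctured-neighbourhood theorem, that if $T-\lambda$ is semi-Fredholm then $T-\mu$ is semi-regular on a deleted neighbourhood of $\lambda$; consequently a point of $\hbox{acc}\,\sigma_K(T)$ cannot be semi-Fredholm, so it lies in $\sigma_{ea}(T)\cap\hbox{acc}\,\sigma(T)\subseteq R_0$. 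The argument for $R_3\subseteq R_0$ is parallel: infinite nullity forces $\lambda\in\sigma_{ea}(T)$, absorbing the extra term $\{\lambda\in\hbox{acc}\,\sigma(T):n(T-\lambda I)=\infty\}$, and replacing $\sigma_{ea}(T)$ by $\sigma_G(T)$ in the first term is harmless because $\sigma_{CI}(T)$ consists of closed-range points.

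The genuinely substantial direction is 1)$\Rightarrow$2) and 1)$\Rightarrow$3), where I must prove the reverse inclusion $R_0\subseteq R_2$ (resp.\ $R_0\subseteq R_3$) \emph{using} $(UW_E)$. For $\lambda\in R_0$ the only problematic situation is $\lambda\in\hbox{acc}\,\sigma(T)\cap\sigma_{ea}(T)$ that happens to be isolated in $\sigma_{ea}(T)$. Then I would choose spectral points $\mu_n\to\lambda$ with $\mu_n\neq\lambda$; since $\lambda$ is isolated in $\sigma_{ea}(T)$, eventually $\mu_n\in\sigma(T)\setminus\sigma_{ea}(T)$, so each $T-\mu_n I$ is upper semi-Fredholm of non-positive index. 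If $n(T-\mu_n I)=0$ then $\mu_n$ is bounded below but not invertible, hence $\mu_n\in\sigma_{CI}(T)$ and $\lambda\in\overline{\sigma_{CI}(T)}$; if $n(T-\mu_n I)>0$ then $\mu_n\in\sigma_a(T)\setminus\sigma_{ea}(T)$, so by $(UW_E)$ we get $\mu_n\in E(T)\subseteq\hbox{iso}\,\sigma(T)\subseteq\partial\sigma(T)\subseteq\sigma_K(T)$, whence $\lambda\in\hbox{acc}\,\sigma_K(T)$. Passing to a subsequence realizing one alternative places $\lambda$ in $R_2$. For 1)$\Rightarrow$3) the same bookkeeping applies, with one further point: a surjective-but-not-invertible $\lambda\in\sigma_{CI}(T)$ with infinite nullity would, if isolated in $\sigma(T)$, give $\lambda\in E(T)\cap\sigma_{ea}(T)$, contradicting $(UW_E)$ (which forces $E(T)\cap\sigma_{ea}(T)=\emptyset$); hence such $\lambda\in\hbox{acc}\,\sigma(T)$ and is caught by the infinite-nullity term.

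The crux — and the step where I expect the main technical care — is precisely this extraction of nearby spectral points and their classification through $(UW_E)$, combined with the structural inputs that isolated (hence boundary) points of the spectrum lie in $\sigma_K(T)$ and that semi-Fredholm points are semi-regular on a punctured neighbourhood. Everything else is a careful but routine case analysis on $n(T-\lambda I)$, $d(T-\lambda I)$, and closedness of the range, organized by Lemma 2.3.
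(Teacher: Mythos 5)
Your overall architecture---routing both equivalences through the hub set $R_0=\sigma_{CI}(T)\cup\Pi_G(T)\cup[\hbox{acc}\,\sigma(T)\cap\sigma_{ea}(T)]$ of Proposition 3.1, with unconditional inclusions $R_2,R_3\subseteq R_0\subseteq\sigma_b(T)$ for the easy directions and a dichotomy on spectral points near an isolated point of $\sigma_{ea}(T)$ for the hard ones---is essentially the paper's argument (the paper runs the cycle $1)\Rightarrow2)\Rightarrow3)\Rightarrow1)$, but its key step is the same classification of nearby points into $\sigma_{CI}(T)$ or $E(T)$). There is, however, one genuinely false step at what you yourself identify as the crux. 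The chain $E(T)\subseteq\hbox{iso}\,\sigma(T)\subseteq\partial\sigma(T)\subseteq\sigma_K(T)$ fails for the Kato spectrum as this paper defines it: here $T$ is Kato merely when $N(T)\subseteq\bigcap_{n}R(T^n)$, with no closed-range requirement, so every injective operator is Kato. For the unilateral shift, $\partial\sigma(T)$ is the unit circle while $T-\lambda I$ is injective for every $\lambda$, so $\partial\sigma(T)\cap\sigma_K(T)=\emptyset$. Even restricted to isolated eigenvalues the inclusion fails: if $Q$ is a compact quasinilpotent backward weighted shift (so $N(Q)={\Bbb C}e_1$ and every $R(Q^k)$ contains all finitely supported vectors), then $T=Q\oplus S$ with $S$ invertible and $0\notin\sigma(S)$ has $0\in E(T)$ yet $T$ is Kato at $0$.

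The step is repairable, but only by invoking ($UW${\scriptsize \it{E}}) again rather than a purely topological inclusion: your points $\mu_n$ lie in $\sigma_a(T)\setminus\sigma_{ea}(T)$, hence, being isolated in $\sigma(T)$ and upper semi-Fredholm, $T-\mu_nI$ is Browder with $p=\hbox{asc}(T-\mu_nI)\geq1$; then $N(T-\mu_nI)\cap R((T-\mu_nI)^p)=\{0\}$ while $N(T-\mu_nI)\neq\{0\}$, so $N(T-\mu_nI)\not\subseteq\bigcap_{k}R((T-\mu_nI)^k)$ and $\mu_n\in\sigma_K(T)$. (Equivalently: a Kato Browder operator is invertible, which is the form in which the paper uses \cite[Lemma 3.4]{ah}.) With this substitution your argument goes through; the sub-cases you delegate to ``the same bookkeeping'' in $1)\Rightarrow3)$---in particular $\lambda\in\overline{\sigma_{CI}(T)}$ with $T-\lambda I$ Fredholm of finite positive index, which lands in $\hbox{acc}\,\sigma_{ea}(T)$ by the punctured neighbourhood theorem---do check out, as do the unconditional inclusions $R_2,R_3\subseteq R_0$.
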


\begin{proof} $1)\Rightarrow 2)$.\  Since $T\in$($UW${\scriptsize \it{E}}), we can get that $\hbox{iso}\sigma_{ea}(T)\subseteq\hbox{iso}\sigma_{a}(T)\cup\hbox{acc}\sigma_{K}(T)$ and  $\sigma_{b}(T)=\sigma_{CI}(T)\cup \Pi_a(T)\cup[\hbox{acc}\sigma(T)\cap\sigma_{ea}(T)]$.
Thus
$$\begin{array}{rcl}
\hbox{acc}\sigma(T)\cap\sigma_{ea}(T)&=&[\hbox{acc}\sigma(T)\cap\hbox{acc}\sigma_{ea}(T)]\cup[\hbox{acc}\sigma(T)\cap\hbox{iso}\sigma_{ea}(T)]\\
&\subseteq&\hbox{acc}\sigma_{ea}(T)\cup\hbox{acc}\sigma_{K}(T)\cup[\hbox{acc}\sigma(T)\cap\hbox{iso}\sigma_{a}(T)]\\
&\subseteq&\hbox{acc}\sigma_{ea}(T)\cup\hbox{acc}\sigma_{K}(T)\cup[\overline{\sigma_{CI}(T)}\cap\sigma_{ea}(T)].
\end{array}$$
Additionally, in view of $\sigma_{CI}(T)=[\overline{\sigma_{CI}(T)}\cap\sigma_{ea}(T)]\cup \Pi(T)$, we have $$\sigma_{b}(T)\subseteq[\overline{\sigma_{CI}(T)}\cap\sigma_{ea}(T)]\cup \Pi(T)\cup\hbox{acc}\sigma_{ea}(T)\cup\hbox{acc}\sigma_K(T).$$
The converse is obvious.

 $2)\Rightarrow 3)$.\ Since $\sigma_{ea}(T)\subseteq\sigma_{G}(T)\cup\{\lambda\in{\Bbb C}: n(T-\lambda I)=\infty\}\cup\hbox{acc}\sigma_{ea}(T)$, one has $\overline{\sigma_{CI}(T)}\cap\sigma_{ea}(T)\subseteq[\overline{\sigma_{CI}(T)}\cap\sigma_{G}(T)]\cup\hbox{acc}\sigma_{ea}(T)\cup\{\lambda\in\hbox{acc}\sigma(T): n(T-\lambda I)=\infty\}$, and the desired result holds.

 $3)\Rightarrow 1)$.\  By the definition of property ($UW${\scriptsize \it{E}}), the proof is clear.
\end{proof}

In order to explore property ($UW${\scriptsize \it{E}}) for $T$ and $T^{\ast}$ simultaneously, let's characterize the necessary and suﬀicient condition for $T^{\ast}\in$($UW${\scriptsize \it{E}}) first with the help of the CI spectrum.

\begin{lem}  Let $T\in B(H)$. Then $T^{\ast}\in$($UW${\scriptsize \it{E}}) if and only if
 $\sigma_{b}(T)=\sigma_{CI}(T)\cup\{\lambda\in\sigma_G(T): n(T^{\ast}-\overline{\lambda} I)=0\}\cup[\hbox{acc}\sigma(T)\cap\sigma_{SF_{-}^{+}}(T)]$.
\end{lem}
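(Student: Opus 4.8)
The plan is to obtain the statement from Proposition 3.1 applied to the adjoint $T^{\ast}\in B(H)$, and then to transport the resulting identity back to spectra of $T$ by complex conjugation. Writing $\overline{S}=\{\overline{\lambda}:\lambda\in S\}$ for $S\subseteq{\Bbb C}$, the first task is to assemble a duality dictionary, namely $\sigma_b(T^{\ast})=\overline{\sigma_b(T)}$, $\sigma_{CI}(T^{\ast})=\overline{\sigma_{CI}(T)}$, $\sigma_G(T^{\ast})=\overline{\sigma_G(T)}$, $\sigma_{ea}(T^{\ast})=\overline{\sigma_{SF_{-}^{+}}(T)}$ and $\hbox{acc}\sigma(T^{\ast})=\overline{\hbox{acc}\sigma(T)}$. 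All of these follow from the elementary adjoint identities $(T-\lambda I)^{\ast}=T^{\ast}-\overline{\lambda}I$, the fact that $R(T-\lambda I)$ is closed iff $R(T^{\ast}-\overline{\lambda}I)$ is closed, $N(T^{\ast}-\overline{\lambda}I)=R(T-\lambda I)^{\perp}$, the interchange $(T-\overline{\lambda}I)^{\ast}\in\Phi_+(H)\Leftrightarrow T-\overline{\lambda}I\in\Phi_-(H)$, and the sign reversal $\hbox{ind}(T^{\ast}-\lambda I)=-\hbox{ind}(T-\overline{\lambda}I)$, together with the fact that conjugation is a homeomorphism of ${\Bbb C}$ carrying $\sigma(T)$ onto $\sigma(T^{\ast})$ and hence accumulation points onto accumulation points.

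With the dictionary in place, I would apply Proposition 3.1 to $T^{\ast}$, which gives that $T^{\ast}\in$($UW${\scriptsize \it{E}}) if and only if
$$\sigma_{b}(T^{\ast})=\sigma_{CI}(T^{\ast})\cup \Pi_G(T^{\ast})\cup[\hbox{acc}\sigma(T^{\ast})\cap\sigma_{ea}(T^{\ast})].$$
Since conjugation is a bijection of ${\Bbb C}$ that commutes with unions and intersections, applying $\overline{(\cdot)}$ to both sides preserves the equality, and each conjugated term is rewritten through the dictionary: the left side becomes $\sigma_b(T)$, the first term becomes $\sigma_{CI}(T)$, and the last becomes $\hbox{acc}\sigma(T)\cap\sigma_{SF_{-}^{+}}(T)$. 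The middle term is the only one requiring attention: using $\sigma_G(T^{\ast})=\overline{\sigma_G(T)}$ and the equivalence $\overline{\lambda}\notin\sigma_p(T^{\ast})\Leftrightarrow n(T^{\ast}-\overline{\lambda}I)=0$, one obtains $\overline{\Pi_G(T^{\ast})}=\{\lambda\in\sigma_G(T): n(T^{\ast}-\overline{\lambda}I)=0\}$. Collecting the terms produces exactly the asserted identity, and since each step is an equivalence the argument runs in both directions.

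The routine final substitution is not where the difficulty lies; the crux is verifying the duality dictionary, and within it two points deserve care. First, the self-duality $\sigma_{CI}(T^{\ast})=\overline{\sigma_{CI}(T)}$ must be checked against Lemma 2.3: invertibility and non-closed range are manifestly self-adjoint conditions, while case $3)$ (namely $N(T-\lambda I)\neq\{0\}$ with $R(T-\lambda I)$ closed and $\neq H$) is carried to the corresponding case for $T^{\ast}-\overline{\lambda}I$ precisely because a closed range forces $n(T^{\ast}-\overline{\lambda}I)=d(T-\lambda I)$ and $d(T^{\ast}-\overline{\lambda}I)=n(T-\lambda I)$. Second, the identity $\sigma_{ea}(T^{\ast})=\overline{\sigma_{SF_{-}^{+}}(T)}$ rests on combining the $\Phi_+$--$\Phi_-$ interchange with the index sign reversal under adjoints, so that the condition $T^{\ast}-\lambda I\notin\Phi_+(H)$ or $\hbox{ind}(T^{\ast}-\lambda I)>0$ translates into $T-\overline{\lambda}I\notin\Phi_-(H)$ or $\hbox{ind}(T-\overline{\lambda}I)<0$. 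Once these are secured the proof is immediate.
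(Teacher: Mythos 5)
Your proposal is correct and takes essentially the same route as the paper: the paper likewise obtains the result by applying Proposition 3.1 to $T^{\ast}$ and translating each of the three sets through complex conjugation (deducing $\overline{\lambda_0}\notin\sigma_{CI}(T^{\ast})\cup \Pi_G(T^{\ast})\cup[\hbox{acc}\sigma(T^{\ast})\cap\sigma_{ea}(T^{\ast})]$ from $\lambda_0$ lying outside the right-hand side), the only cosmetic difference being that the paper argues pointwise and checks the sufficiency directly from the definition rather than conjugating the set identity wholesale. Your duality dictionary, in particular $\sigma_{CI}(T^{\ast})=\overline{\sigma_{CI}(T)}$ via Lemma 2.3 and $\sigma_{ea}(T^{\ast})=\overline{\sigma_{SF_{-}^{+}}(T)}$ via the index sign reversal, is exactly the translation the paper uses implicitly.
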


 \begin{proof}For the necessity, assume that $\lambda_0\notin\sigma_{CI}(T)\cup\{\lambda\in\sigma_G(T): n(T^{\ast}-\overline{\lambda} I)=0\}\cup[\hbox{acc}\sigma(T)\cap\sigma_{SF_{-}^{+}}(T)]$. We can deduce $\overline{\lambda_0}\notin\sigma_{CI}(T^{\ast})\cup \Pi_G(T^{\ast})\cup[\hbox{acc}\sigma(T^{\ast})\cap\sigma_{ea}(T^{\ast})]$. In virtue of $T^{\ast}\in$($UW${\scriptsize \it{E}}), it follows from Proposition 3.1 that $T^{\ast}-\overline{\lambda_0} I$ is Browder. Thus $\lambda_0\notin\sigma_b(T)$.

On the contrary, if $\overline{\lambda_0}\in\sigma_{a}(T^{\ast})\setminus\sigma_{ea}(T^{\ast})$, $\lambda_0\notin\sigma_{CI}(T)\cup\{\lambda\in\sigma_G(T): n(T^{\ast}-\overline{\lambda} I)=0\}\cup[\hbox{acc}\sigma(T)\cap\sigma_{SF_{-}^{+}}(T)]$, and furthermore, doesn't belong to $\sigma_{b}(T)$. Hence $\overline{\lambda_0}\in E(T^{\ast})$. The converse is equally true.
 \end{proof}

\begin{lem} Suppose $T\in$($UW${\scriptsize\it{E}}). Then $T^{\ast}\in$($UW${\scriptsize \it{E}}) if and only if $\sigma_b(T)=\sigma_{CI}(T)\cup \Pi^{\ast}(T)\cup[\hbox{acc}\sigma(T)\cap\sigma_{SF_{-}^{+}}(T)]$, where $\Pi^{\ast}(T)=\{\lambda\in\sigma(T): n(T-\lambda I)=n(T^{\ast}-\overline{\lambda} I)=0\}$ . 
\end{lem}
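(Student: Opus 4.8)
The plan is to reduce the statement to the preceding Lemma 3.3 by means of a purely set-theoretic comparison. Write $G(T)=\{\lambda\in\sigma_G(T): n(T^{\ast}-\overline{\lambda}I)=0\}$ and $C(T)=\hbox{acc}\,\sigma(T)\cap\sigma_{SF_{-}^{+}}(T)$, so that by Lemma 3.3 the condition $T^{\ast}\in$($UW_E$) is equivalent to $\sigma_b(T)=\sigma_{CI}(T)\cup G(T)\cup C(T)$. It therefore suffices to establish, under the standing hypothesis $T\in$($UW_E$), the single set identity $\sigma_{CI}(T)\cup G(T)\cup C(T)=\sigma_{CI}(T)\cup\Pi^{\ast}(T)\cup C(T)$; substituting this into Lemma 3.3 then delivers both directions of the asserted equivalence at once.

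The easy inclusion $\Pi^{\ast}(T)\subseteq G(T)$ I would obtain directly and without using property ($UW_E$). If $\lambda\in\Pi^{\ast}(T)$, then $\lambda\in\sigma(T)$, $n(T-\lambda I)=0$, and $n(T^{\ast}-\overline{\lambda}I)=0$, the last meaning that $R(T-\lambda I)$ is dense. Since an injective operator with dense but \emph{closed} range is invertible, and $\lambda\in\sigma(T)$ forbids this, the range must fail to be closed; hence $\lambda\in\sigma_G(T)$ with $n(T^{\ast}-\overline{\lambda}I)=0$, i.e.\ $\lambda\in G(T)$. This yields $\sigma_{CI}(T)\cup\Pi^{\ast}(T)\cup C(T)\subseteq\sigma_{CI}(T)\cup G(T)\cup C(T)$.

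The substantive inclusion is $G(T)\subseteq\Pi^{\ast}(T)\cup C(T)$, and this is the only place where property ($UW_E$) is genuinely used. Take $\lambda\in G(T)$; as $R(T-\lambda I)$ is not closed we have $T-\lambda I\notin\Phi_-(H)$, whence $\lambda\in\sigma_{SF_{-}^{+}}(T)$, and also $\lambda\in\sigma(T)$. If $\lambda\in\hbox{acc}\,\sigma(T)$ then $\lambda\in C(T)$ and we are finished. Otherwise $\lambda\in\hbox{iso}\,\sigma(T)$, and I claim $n(T-\lambda I)=0$: if instead $n(T-\lambda I)>0$, then $\lambda\in\hbox{iso}\,\sigma(T)\cap\sigma_p(T)=E(T)$, so $T\in$($UW_E$) gives $\lambda\in\sigma_a(T)\setminus\sigma_{ea}(T)$, forcing $T-\lambda I\in\Phi_+(H)$ and thus $R(T-\lambda I)$ closed, which contradicts $\lambda\in\sigma_G(T)$. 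Hence $n(T-\lambda I)=0$, and together with $n(T^{\ast}-\overline{\lambda}I)=0$ and $\lambda\in\sigma(T)$ this places $\lambda$ in $\Pi^{\ast}(T)$.

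Combining the two inclusions gives the desired identity, and the proof closes by appeal to Lemma 3.3. I expect the only delicate point to be the isolated-spectrum case of the second inclusion: the entire force of the hypothesis $T\in$($UW_E$) is spent in converting an isolated eigenvalue into a point of $\Phi_+(H)$, whose automatic closed-range property then collides with membership in $\sigma_G(T)$. The remaining ingredients — that operators in $\Phi_+(H)$ and in $\Phi_-(H)$ necessarily have closed range, and that the points of $\sigma_{CI}(T)$ already have closed range so they never interfere with $G(T)$ — are standard facts from Fredholm theory and need no separate argument.
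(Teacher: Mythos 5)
Your proposal is correct. The one substantive step --- the inclusion $G(T)\subseteq\Pi^{\ast}(T)\cup C(T)$ --- is argued soundly: non-closed range forces $d(T-\lambda I)=\infty$, hence membership in $\sigma_{SF_{-}^{+}}(T)$, and in the isolated case property ($UW${\scriptsize \it{E}}) converts a nonzero kernel into upper semi-Fredholmness, whose closed range contradicts $\lambda\in\sigma_G(T)$; the reverse inclusion $\Pi^{\ast}(T)\subseteq G(T)$ is the elementary observation that an injective operator with dense, non-surjective range cannot have closed range. Your route differs from the paper's in a worthwhile way. The paper also reduces the sufficiency to Lemma 3.3, but proves the necessity by a separate direct case analysis (splitting on whether $R(T-\lambda_0 I)$ is closed) that invokes $T^{\ast}\in$($UW${\scriptsize \it{E}}) itself at several points. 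You instead isolate a single set identity, $\sigma_{CI}(T)\cup G(T)\cup C(T)=\sigma_{CI}(T)\cup\Pi^{\ast}(T)\cup C(T)$, valid under the standing hypothesis $T\in$($UW${\scriptsize \it{E}}) alone, and then obtain both directions of the equivalence simultaneously from Lemma 3.3. This buys a cleaner logical structure --- the dual hypothesis $T^{\ast}\in$($UW${\scriptsize \it{E}}) is never touched outside of Lemma 3.3, and the lemma is exposed as a pure reformulation of Lemma 3.3 modulo a set-theoretic simplification --- at the cost of nothing; the paper's version, by contrast, spreads the use of the two ($UW${\scriptsize \it{E}}) hypotheses across the cases and leaves the reverse inclusion as ``obvious.''
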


 \begin{proof}From Lemma 3.3, it suffices to show the necessity. Let $\lambda_0\notin\sigma_{CI}(T)\cup \Pi^{\ast}(T)\cup[\hbox{acc}\sigma(T)\cap\sigma_{SF_{-}^{+}}(T)]$. Without loss of generality, we assume $\lambda_0\in\sigma(T)$. It follows from $\lambda_0\notin\sigma_{CI}(T)$ that either $R(T-\lambda_0 I)$ is not closed or  $R(T-\lambda_0 I)$ is closed with $n(T-\lambda_0 I)\cdot d(T-\lambda_0 I)>0$.
 
  Suppose firstly that $R(T-\lambda_0 I)$ is not closed. One can see $\lambda_0\in\hbox{iso}\sigma(T)$. Note that $\lambda_0\notin \Pi^{\ast}(T)$, then we have either $n(T-\lambda_0 I)\neq 0$ or $n(T-\lambda_0 I)=0$ whereas $n(T^{\ast}-\overline{\lambda_0} I)\neq0$. Moreover, it deduces from $T\in$($UW${\scriptsize \it{E}}) that $\lambda_0\notin\sigma_b(T)$ provided that $n(T-\lambda_0 I)\neq 0$. If $n(T-\lambda_0 I)=0$ and $n(T^{\ast}-\overline{\lambda_0} I)\neq0$, in view of $T^{\ast}\in$($UW${\scriptsize \it{E}}), we can also get that $\lambda_0\notin\sigma_b(T)$.  
  
  Suppose secondly that $R(T-\lambda_0 I)$ is closed with $n(T-\lambda_0 I)\cdot d(T-\lambda_0 I)>0$.
We have either $\lambda_0\in\hbox{iso}\sigma(T)$ or $\lambda_0\notin\sigma_{SF_{-}^{+}}(T)$. If $\lambda_0\in\hbox{iso}\sigma(T)$, $\lambda_0\notin\sigma_b(T)$ in terms of either $T\in$($UW${\scriptsize \it{E}}) or $T^{\ast}\in$($UW${\scriptsize \it{E}}). Provided $\lambda_0\notin\sigma_{SF_{-}^{+}}(T)$, we can see that $\overline{\lambda_0}\in\sigma_a(T^{\ast})\setminus\sigma_{ea}(T^{\ast})$. 
Using $T^{\ast}\in$($UW${\scriptsize \it{E}}) once again, we can finally deduce that $T-\lambda_0 I$ is Browder.

The converse is obvious. Thus a desired result emerges.  
\end{proof}

\begin{thm} Let $T\in B(H)$. Then both $T$ and $T^{\ast}$ are in ($UW${\scriptsize \it{E}}) if and only if
 $\sigma_{b}(T)=\sigma_{CI}(T)\cup \Pi^{\ast}(T)\cup[\hbox{acc}\sigma(T)\cap\sigma_{SF}(T)]$.
\end{thm}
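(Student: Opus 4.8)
The plan is to deduce the theorem by fusing the two characterizations already in hand: Proposition 3.1 (in its $\Pi_a$ form) for $T\in(UW_E)$, and Lemma 3.4 for the pair $T,T^{\ast}\in(UW_E)$. The point is that the two ``accumulation'' contributions $\hbox{acc}\sigma(T)\cap\sigma_{ea}(T)$ and $\hbox{acc}\sigma(T)\cap\sigma_{SF_{-}^{+}}(T)$ coming from $T$ and $T^{\ast}$ respectively should collapse into the single term $\hbox{acc}\sigma(T)\cap\sigma_{SF}(T)$. The linchpin is the index identity $\sigma_{ea}(T)\cap\sigma_{SF_{-}^{+}}(T)=\sigma_{SF}(T)$, which I would establish first by a short case analysis: writing $\sigma_{ea}(T)=\sigma_{SF_+}(T)\cup B$ and $\sigma_{SF_{-}^{+}}(T)=\sigma_{SF_-}(T)\cup D$ with $B=\{\lambda:T-\lambda I\in\Phi_+(H),\ \hbox{ind}(T-\lambda I)>0\}$ and $D=\{\lambda:T-\lambda I\in\Phi_-(H),\ \hbox{ind}(T-\lambda I)<0\}$, a point of $B$ is Fredholm of positive index, hence lies in $\Phi_-(H)$ with index $>0$ and so misses $\sigma_{SF_{-}^{+}}(T)$, and symmetrically $D$ misses $\sigma_{ea}(T)$; the intersection therefore reduces to $\sigma_{SF_+}(T)\cap\sigma_{SF_-}(T)=\sigma_{SF}(T)$. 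I would also record the standing fact $\hbox{acc}\sigma(T)\subseteq\sigma_b(T)$, which follows since $\sigma(T)\setminus\sigma_b(T)=\sigma_0(T)$ consists of Riesz points, each isolated in $\sigma(T)$.

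For necessity, assume $T,T^{\ast}\in(UW_E)$. Because $\sigma_{SF}(T)\subseteq\sigma_{SF_{-}^{+}}(T)$, Lemma 3.4 shows at once that the right-hand side of the asserted formula is contained in $\sigma_b(T)$. For the reverse inclusion, take $\lambda_0\in\sigma_b(T)$ with $\lambda_0\notin\sigma_{CI}(T)\cup\Pi^{\ast}(T)$; Lemma 3.4 places $\lambda_0$ in $\hbox{acc}\sigma(T)\cap\sigma_{SF_{-}^{+}}(T)$, so in particular $\lambda_0\in\hbox{acc}\sigma(T)$ and $\lambda_0\in\sigma_{SF_{-}^{+}}(T)$. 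Applying the $\Pi_a$-form of Proposition 3.1 to $T\in(UW_E)$ and using $\lambda_0\notin\sigma_{CI}(T)$, I get $\lambda_0\in\Pi_a(T)\cup[\hbox{acc}\sigma(T)\cap\sigma_{ea}(T)]$; in the first branch $n(T-\lambda_0 I)=0$ together with $\lambda_0\in\sigma_a(T)$ forces $R(T-\lambda_0 I)$ non-closed, whence $\lambda_0\in\sigma_{SF_+}(T)\subseteq\sigma_{ea}(T)$, while in the second branch $\lambda_0\in\sigma_{ea}(T)$ directly. Thus $\lambda_0\in\sigma_{ea}(T)\cap\sigma_{SF_{-}^{+}}(T)=\sigma_{SF}(T)$ by the identity, and combined with $\lambda_0\in\hbox{acc}\sigma(T)$ this gives $\lambda_0\in\hbox{acc}\sigma(T)\cap\sigma_{SF}(T)$.

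For sufficiency, assume $\sigma_b(T)=\sigma_{CI}(T)\cup\Pi^{\ast}(T)\cup[\hbox{acc}\sigma(T)\cap\sigma_{SF}(T)]$. Using $\Pi^{\ast}(T)\subseteq\Pi_G(T)$ (a point of $\Pi^{\ast}(T)$ has $n=d=0$ yet lies in $\sigma(T)$, so $R(T-\lambda I)$ is non-closed) and $\sigma_{SF}(T)\subseteq\sigma_{ea}(T)$, the hypothesis yields $\sigma_b(T)\subseteq\sigma_{CI}(T)\cup\Pi_G(T)\cup[\hbox{acc}\sigma(T)\cap\sigma_{ea}(T)]$; the reverse inclusion is the always-valid half of Proposition 3.1 (each of $\sigma_{CI}(T)$, $\Pi_G(T)$ and $\hbox{acc}\sigma(T)\cap\sigma_{ea}(T)$ sits inside $\sigma_b(T)$), so Proposition 3.1 gives $T\in(UW_E)$. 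With $T\in(UW_E)$ secured I invoke Lemma 3.4: since $\sigma_{SF}(T)\subseteq\sigma_{SF_{-}^{+}}(T)$ gives $\sigma_b(T)\subseteq\sigma_{CI}(T)\cup\Pi^{\ast}(T)\cup[\hbox{acc}\sigma(T)\cap\sigma_{SF_{-}^{+}}(T)]$, while the reverse inclusion again holds automatically (each term lies in $\sigma_b(T)$, using $\hbox{acc}\sigma(T)\subseteq\sigma_b(T)$), the Lemma 3.4 formula holds and forces $T^{\ast}\in(UW_E)$.

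The step I expect to demand the most care is the necessity argument, specifically the branch $\lambda_0\in\Pi_a(T)$: there the membership $\lambda_0\in\hbox{acc}\sigma(T)$ is not visible from $T\in(UW_E)$ alone and must be imported from the $\sigma_{SF_{-}^{+}}$-term supplied by Lemma 3.4, so the two characterizations genuinely have to be used in tandem rather than in sequence. The supporting identity $\sigma_{ea}(T)\cap\sigma_{SF_{-}^{+}}(T)=\sigma_{SF}(T)$, though elementary, is exactly what makes the fusion go through and should be stated explicitly before the main argument.
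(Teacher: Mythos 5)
Your proof is correct, and in the sufficiency direction it coincides with the paper's: both pass from the hypothesis to $\sigma_b(T)\subseteq\sigma_{CI}(T)\cup\Pi_G(T)\cup[\hbox{acc}\sigma(T)\cap\sigma_{ea}(T)]$ to invoke Proposition 3.1, and then feed the formula (enlarged via $\sigma_{SF}(T)\subseteq\sigma_{SF_{-}^{+}}(T)$) into Lemma 3.4 to conclude $T^{\ast}\in(UW_E)$. The necessity is where you genuinely diverge. The paper proves, directly from property $(UW_E)$ for $T$ and $T^{\ast}$, the standalone identity $\sigma_b(T)=\sigma_{CI}(T)\cup\sigma_{SF}(T)$ (a contrapositive case analysis on points with $R(T-\lambda_0I)$ closed and $n(T-\lambda_0I)\cdot d(T-\lambda_0I)>0$) and then intersects it with the immediate $\sigma_b(T)=\Pi^{\ast}(T)\cup\hbox{acc}\sigma(T)$; you instead re-enter Lemma 3.4 together with the $\Pi_a$-form of Proposition 3.1 and collapse their two accumulation terms through the index identity $\sigma_{ea}(T)\cap\sigma_{SF_{-}^{+}}(T)=\sigma_{SF}(T)$. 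Your route reduces everything to statements already proved plus one elementary semi-Fredholm computation, and you correctly flag the delicate point that membership in $\hbox{acc}\sigma(T)$ must be imported from the Lemma 3.4 term when $\lambda_0\in\Pi_a(T)$; the paper's route yields the reusable fact $\sigma_b(T)=\sigma_{CI}(T)\cup\sigma_{SF}(T)$ as a by-product. One cosmetic slip: in justifying $\Pi^{\ast}(T)\subseteq\Pi_G(T)$ you write ``$n=d=0$'', but $n(T^{\ast}-\overline{\lambda}I)=0$ only says $R(T-\lambda I)$ is dense (its codimension in the paper's sense is then infinite, not zero, once the range fails to be closed); the argument you actually run --- injective with dense range inside $\sigma(T)$ forces a non-closed range --- is the correct one.
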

\begin{proof} If $T$ and $T^{\ast}$ are both in ($UW${\scriptsize \it{E}}), $\sigma_b(T)=\Pi^{\ast}(T)\cup\hbox{acc}\sigma(T)$. Now we claim that $\sigma_b(T)=\sigma_{CI}(T)\cup\sigma_{SF}(T)$.

Indeed, suppose $\lambda_0\notin\sigma_{CI}(T)\cup\sigma_{SF}(T)$, then $R(T-\lambda_0 I)$ is closed, which infers that either $\lambda_0\in\rho(T)$ or $n(T-\lambda_0 I)\cdot d(T-\lambda_0 I)>0$. Without loss of generality we can assume that $n(T-\lambda_0 I)\cdot d(T-\lambda_0 I)>0$. In view of $\lambda_0\in\rho_{SF}(T)$, $\lambda_0\in\sigma_{a}(T)\setminus\sigma_{ea}(T)$ or $\overline{\lambda_0}\in\sigma_{a}(T^{\ast})\setminus\sigma_{ea}(T^{\ast})$. Since $T\in$($UW${\scriptsize \it{E}}) and $T^{\ast}\in$($UW${\scriptsize \it{E}}), we find that $\lambda_0\notin\sigma_{b}(T)$. The converse is apparent. Hence $\sigma_{b}(T)=\sigma_{CI}(T)\cup\Pi^{\ast}(T)\cup[\hbox{acc}\sigma(T)\cap\sigma_{SF}(T)]$.

 Conversely, the condition shows that $\sigma_{b}(T)\subseteq\sigma_{CI}(T)\cup\Pi_G(T)\cup[\hbox{acc}\sigma(T)\cap\sigma_{ea}(T)]$. In view of Proposition 3.1,  $T\in$($UW${\scriptsize \it{E}}). And furthermore, note that $\sigma_{b}(T)=\sigma_{CI}(T)\cup \Pi^{\ast}(T)\cup[\hbox{acc}\sigma(T)\cap\sigma_{SF}(T)]\subseteq\sigma_{CI}(T)\cup\Pi^{\ast}(T)\cup[\hbox{acc}\sigma(T)\cap\sigma_{SF_{-}^{+}}(T)]$. According to Lemma 3.4, we can deduce that $T^{\ast}\in$($UW${\scriptsize \it{E}}).
  \end{proof}

\begin{rem}
 If both $T$ and $T^{\ast}$ are in ($UW${\scriptsize \it{E}}), from the following examples one can see three parts which form $\sigma_{b}(T)$ together are essential in Theorem 3.5.

$(i)$\ Let $T_2\in B(\ell^2)$ be defined by:
$T_2(x_1, x_2, x_3, \cdots)=(0, x_{1}, x_{2}, \cdots)$. Then $T_2\in$($UW${\scriptsize \it{E}}) and ${T_2}^{\ast}\in$($UW${\scriptsize \it{E}}).
But $\Pi^{\ast}(T_2)\cup[\hbox{acc}\sigma(T_2)\cap\sigma_{SF}(T_2)]=\{\lambda\in{\Bbb C}: \lvert \lambda \rvert=1\}\neq\sigma_{b}(T_2)$. Hence ``$\sigma_{CI}(T_2)$" cannot be dropped.

$(ii)$\ Let $A, B,C\in B(\ell^2)$ be defined by:
$A(x_1, x_2, x_3, \cdots)=(0, x_1,\frac{x_2}{2}, \frac{x_3}{3},\\ \cdots)$, $B(x_1, x_2, x_3, \cdots)=(x_2, \frac{x_3}{2}, \frac{x_4}{3}, \cdots)$, and $C(x_1, x_2, x_3, \cdots)=(x_1, 0,0, \cdots)$.
Set $T_3=\bigl(\begin{smallmatrix} A&C\\0&B \end{smallmatrix}\bigr)$. Then $\sigma(T_3)=\{0\}$ and $n(T_3)=n({T_3}^{\ast})=0$, and thus both $T_3$ and ${T_3}^{\ast}$ are in ($UW${\scriptsize \it{E}}). However, $\sigma_{CI}(T_3)\cup[\hbox{acc}\sigma(T_3)\cap\sigma_{SF}(T_3)]=\emptyset\neq\sigma_{b}(T_3)$. This implies that the demand ``$\Pi^{\ast}(T_3)$" cannot be dropped.

$(iii)$\ Let $A, B\in B(\ell^2)$ be defined by:
$A(x_1, x_2, x_3, \cdots)=(0, x_{1}, 0, x_{2}, \cdots)$, $B(x_1, x_2, x_3, \cdots)=({x_2}, {x_4}, \cdots)$. Put $T_4=\bigl(\begin{smallmatrix} A&0\\0&B \end{smallmatrix}\bigr)$. Then $T_4\in$($UW${\scriptsize \it{E}}) and ${T_4}^{\ast}\in$($UW${\scriptsize \it{E}}). But $\sigma_{CI}(T_4)\cup\Pi^{\ast}(T_4)=\{\lambda\in{\Bbb C}: \lvert \lambda \rvert=1\}\neq\sigma_{b}(T_4)$. Therefore, the demand ``$\hbox{acc}\sigma(T_4)\cap\sigma_{SF}(T_4)$" cannot be dropped.
 \end{rem}

\section{Transfer of property ($UW_E$)}

Let Hol($\sigma(T)$) denote the set of functions $f$ which are analytic on some neighborhood of $\sigma(T)$. For $f\in$Hol$(\sigma(T))$, $f(T)$ denotes the holomorphic functional calculus of $T$ with respect to $f$.
Notice that one can not deduce $f(T)\in$($UW${\scriptsize \it{E}}) for $f\in$Hol$(\sigma(T))$ provided that $T\in$($UW${\scriptsize \it{E}}). 
For instance, put $A, B\in B(\ell^2)$ be defined by:
$A(x_1, x_2, x_3, \cdots)=(0, x_{1}, x_{2}, \cdots)$, $B(x_1, x_2, x_3, \cdots)=(x_{2}, x_{3},x_4, \cdots)$.
Set $T_5=\bigl(\begin{smallmatrix} A+I&0\\0&B-I \end{smallmatrix}\bigr)$. It is clear that $T_5\in$($UW${\scriptsize \it{E}}). However, $(T_5-I)(T_5+I)$ is Weyl but not Browder, thus it is not in ($UW${\scriptsize \it{E}}).

The aim of this section is to probe the transfer of property ($UW${\scriptsize \it{E}}) from $T$ to $f(T)$.  Moreover, the transfer of property ($UW${\scriptsize \it{E}}) from $T$ to $f(T)$ and $f(T^{\ast})$ simultaneously is explored.
Now let's explore property ($UW${\scriptsize \it{E}}) for $f(T)$ first.

\begin{lem} Let $T\in$($UW${\scriptsize \it{E}}). Then $f(T)\in$($UW${\scriptsize \it{E}}) for any $f\in$Hol$(\sigma(T))$ if and only if one of the two following conditions holds:

$1)$\ $\hbox{ind}(T-\lambda I)\cdot \hbox{ind}(T-\mu I)\geq 0$ for any $\lambda,\mu\in\rho_{{SF}_{+}}(T)$ whenever $\sigma_0(T)=\emptyset$;

$2)$\  $\sigma_b(T)=[\overline{\sigma_{CI}(T)}\cap\sigma_{ea}(T)]\cup\hbox{acc}[\rho_{CI}(T)\cap\sigma_{ea}(T)]\cup\hbox{acc}\sigma_K(T)$.
 \end{lem}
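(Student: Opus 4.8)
The plan is to reduce the statement to the classical fact that functional calculus preserves property ($UW_E$) precisely when the index does not change sign across the "gaps" where the operator is semi-Fredholm but not Browder. Since $T$ already satisfies ($UW_E$), the spectral picture is well understood, and the task is to pin down which of the two listed conditions is forced. My strategy would be to recall the spectral mapping theorems that hold for the relevant spectra ($\sigma_b$, $\sigma_{ea}$, $\sigma_K$, $\sigma_{CI}$) under $f\in$Hol$(\sigma(T))$, and then analyze how property ($UW_E$) for $f(T)$ translates back to a condition on $T$.

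\medskip

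First I would invoke the spectral mapping theorem $f(\sigma_b(T))=\sigma_b(f(T))$, together with the standard fact that $f(\sigma_{ea}(T))=\sigma_{ea}(f(T))$ \emph{only when the index condition holds}. The crux is that for an arbitrary analytic $f$, the essential approximate point spectrum of $f(T)$ can be strictly smaller than $f(\sigma_{ea}(T))$ when $T-\lambda I$ and $T-\mu I$ have opposite-sign indices for two roots $\lambda,\mu$ of $f(\cdot)-\nu$; the indices cancel and $f(T)-\nu I$ may become Weyl even though some factor is not semi-Fredholm of nonnegative index. Thus I would split into the two regimes singled out in the statement. In the regime where $\sigma_0(T)=\emptyset$ (no Riesz points), the only obstruction to transferring ($UW_E$) is precisely this index cancellation, which is exactly condition $1)$; I would verify that requiring $\hbox{ind}(T-\lambda I)\cdot\hbox{ind}(T-\mu I)\geq 0$ on $\rho_{SF_+}(T)$ prevents a non-Browder Weyl point from appearing in $f(T)$.

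\medskip

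Next I would handle the complementary regime $\sigma_0(T)\neq\emptyset$, where isolated Riesz points are present and condition $2)$ enters. Here I would start from the characterization of ($UW_E$) for $f(T)$ coming from Corollary 2.8 applied to $f(T)$, namely
$$\sigma_b(f(T))=[\overline{\sigma_{CI}(f(T))}\cap\sigma_{ea}(f(T))]\cup\Pi(f(T))\cup\hbox{acc}\,\sigma_{ea}(f(T))\cup\hbox{acc}\,\sigma_K(f(T)),$$
and then pull each piece back through $f$ using the spectral mapping properties. The appearance of $\hbox{acc}[\rho_{CI}(T)\cap\sigma_{ea}(T)]$ in condition $2)$ is the pullback of $\hbox{acc}\,\sigma_{ea}(f(T))$ once one accounts for how $f$ collapses distinct points of $\sigma(T)$ to a single point of $\sigma(f(T))$; the set $\rho_{CI}(T)\cap\sigma_{ea}(T)$ is exactly where $T-\lambda I$ fails to be bounded below yet remains in the "good" $CI$-complement, and taking accumulation captures the limit points that survive. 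I would verify both inclusions in condition $2)$ directly, using that $T\in$($UW_E$) already forces $\sigma_a(T)\setminus\sigma_{ea}(T)=E(T)$.

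\medskip

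The main obstacle, I expect, will be establishing the equivalence cleanly across the dichotomy $\sigma_0(T)=\emptyset$ versus $\sigma_0(T)\neq\emptyset$: one must show that condition $1)$ and condition $2)$ are genuinely the two faces of a single "no index-cancellation / correct accumulation structure" requirement, and that exactly one of them is the operative hypothesis in each case. The delicate point is the interaction between the index-sign condition and the spectral mapping for $\sigma_{ea}$, since the failure set for $f(\sigma_{ea}(T))=\sigma_{ea}(f(T))$ depends nonlinearly on the multiplicities of roots of $f(\cdot)-\nu$. I would isolate this in a lemma relating the sign of indices on $\rho_{SF_+}(T)$ to the $CI$-spectrum and the accumulation sets, and then the equivalence should follow by matching each term of the $\sigma_b(f(T))$ decomposition with its preimage under $f$.
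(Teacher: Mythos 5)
Your first regime ($\sigma_0(T)=\emptyset$) is in the same spirit as the paper: there the necessity is obtained by testing with $f_1(T)=(T-\lambda_0I)^{n}(T-\mu_0I)^{m}$ to manufacture a point of $\sigma_a(f_1(T))\setminus\sigma_{ea}(f_1(T))$ that cannot be Browder, and the sufficiency follows from $\sigma_a(T)=\sigma_{ea}(T)$, $E(T)=\emptyset$ and the index condition forcing $\sigma_{ea}(f(T))=f(\sigma_{ea}(T))$. That part of your plan would go through.

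The genuine gap is in the regime $\sigma_0(T)\neq\emptyset$. Your plan is to apply the $\overline{\sigma_{CI}}$-characterization of property ($UW${\scriptsize \it{E}}) to $f(T)$ and ``pull each piece back through $f$''. This cannot be executed as stated: neither $\sigma_{CI}$ nor $\sigma_K$ obeys a spectral mapping theorem, and $\sigma_{ea}(f(T))$ can be a proper subset of $f(\sigma_{ea}(T))$, so the terms $\overline{\sigma_{CI}(f(T))}$, $\hbox{acc}\,\sigma_{ea}(f(T))$ and $\hbox{acc}\,\sigma_K(f(T))$ have no usable preimage description. What the paper actually does is entirely different and is the missing idea in your proposal: it exploits the \emph{existence} of a Riesz point $\lambda_1\in\sigma_0(T)$ as a multiplier. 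Testing with $f_2(T)=(T-\lambda_1I)(T-\lambda_2I)$ for $\lambda_2\in\sigma_b(T)\setminus\sigma_{ea}(T)$ shows $\sigma_{ea}(T)=\sigma_b(T)$, and testing with $f_3(T)=(T-\lambda_1I)(T-\lambda_3I)$ for $\lambda_3\in\hbox{iso}\,\sigma(T)\setminus\sigma_0(T)$ --- after splitting $T$ by spectral idempotents so that $0\in E(f_3(T))$ --- shows $\hbox{iso}\,\sigma(T)=\sigma_0(T)$. Only after these two identities is the decomposition in condition $2)$ verified, directly on $T$ and with no pullback. Likewise, your proposal omits the sufficiency argument for condition $2)$, which in the paper proceeds by factoring $f(T)-\mu_0I=a\prod_i(T-\lambda_iI)^{n_i}g(T)$ and showing each linear factor is Browder using $\hbox{ind}(T-\lambda I)\geq0$ on $\rho_{SF_+}(T)$ and $\hbox{iso}\,\sigma(T)=\sigma_0(T)$; the vague ``match each term with its preimage'' step does not supply this. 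Without the Riesz-point test functions and the factorization argument, the equivalence with condition $2)$ is not established.
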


\begin{proof} Suppose $f(T)\in$($UW${\scriptsize \it{E}}) for any $f\in$Hol$(\sigma(T))$. Let $\sigma_0(T)=\emptyset$. Suppose on the contrary that there exist $\lambda_0,\mu_0\in\rho_{SF_+}(T)$ with $\hbox{ind}(T-\lambda_0I)<0$ and $\hbox{ind}(T-\mu_0I)=n$, where $n$ is a positive integer. Let $f_1(T)=(T-\lambda_0 I)^{n}(T-\mu_0I)^{m}$ when $\hbox{ind}(T-\lambda_0I)=-m$, where $m$ is a positive integer, or $f_1(T)=(T-\lambda_0 I)(T-\mu_0I)$ when $\hbox{ind}(T-\lambda_0I)=-\infty$. Then $0\in\sigma_a(f_1(T))\backslash\sigma_{ea}(f_1(T))$ in either of the two cases. In view of $f_1(T)\in$($UW${\scriptsize \it{E}}), $f_1(T)$ is Browder, so is $T-\lambda_0I$, which contradicts with the fact ``$\hbox{ind}(T-\lambda_0I)<0$".

If $\sigma_0(T)\neq\emptyset$, let $\lambda_1\in\sigma_0(T)$.  We claim that $\sigma_{ea}(T)=\sigma_b(T)$ and $\hbox{iso}\sigma(T)=\sigma_0(T)$.
Otherwise, choose $\lambda_2\in\sigma_b(T)\backslash\sigma_{ea}(T)$. Then put $f_2(T)=(T-\lambda_1 I)(T-\lambda_2I)$. One has $T-\lambda_2I$ is Browder, a contradiction.
For the other equality, suppose $\lambda_3\in\hbox{iso}\sigma(T)\backslash\sigma_0(T)$.
Put $K_{\alpha}=\{\lambda_{1}\}$, $K_{\beta}=\{\lambda_3\}$ and $K_{\gamma}=\sigma(T)\setminus(K_{\alpha}\cup K_{\beta})$.
Using \cite[Theorem 2.10]{ag}, $T$ can be written as $$T=\left(\begin{array}{cccc}T_{\alpha}&0&0\\0&T_{\beta}&0\\0&0&T_{\gamma}\\\end{array}\right)
\begin{array}{cccc}H(K_{\alpha};T)\\H(K_{\beta};T)\\H(K_{\gamma};T)\\\end{array}{\hbox{,}}$$
where $\sigma(T_i)=K_i$, $i=\alpha,\beta,\gamma.$
Put $f_3(T)=(T-\lambda_1I)(T-\lambda_3I)$.
Then $$f_3(T)=\left(\begin{array}{cccc}f_3(T_\alpha)&0&0\\0&f_3(T_\beta)&0\\0&0&f_3(T_\gamma)
\\\end{array}\right)\begin{array}{cccc}H(K_{\alpha};T)\\H(K_{\beta};T)\\H(K_{\gamma};T)
\\\end{array}{\hbox{.}}$$
It follows from the spectral mapping theorem that $\sigma(f_3(T_{i}))=f_3(\sigma(T_{i}))=\{0\},$ where $i=\alpha,\beta.$ Note that $0\notin\sigma(f_3(T_\gamma))$. Then $0\in\hbox{iso}\sigma(f_3(T))$. Besides,  by $n(f_3(T))\geq n(T-\lambda_1I)>0$, one has $0\in E(f_3(T))$, and $f_3(T)\in$($UW${\scriptsize \it{E}}). Thus $f_3(T)$, and furthermore, $T-\lambda_2 I$ is Browder, which is a contradiction. Therefore, the above assertion holds.

Next we show that $\sigma_{b}(T)$ has the desired decomposition. Take arbitrarily $\lambda\notin[\overline{\sigma_{CI}(T)}\cap\sigma_{ea}(T)]\cup\hbox{acc}[\rho_{CI}(T)\cap\sigma_{ea}(T)]\cup\hbox{acc}\sigma_K(T)$. Without loss of generality, we may assume that $\lambda\notin\overline{\sigma_{CI}(T)}$. Then there is a $\varepsilon>0$ such that $\mu\notin\sigma_{ea}(T)$ for any $\mu$ with $0<\rvert \mu-\lambda\lvert<\varepsilon$. Thus $T-\mu I$ is Browder. Furthermore, it follows from $\lambda\notin\hbox{acc}\sigma_K(T)$ \cite[Lemma 3.4]{ah} that  $T-\mu I$ is invertible. Therefore  $\lambda\in\hbox{iso}\sigma(T)\cup\rho(T)\subseteq\rho_b(T)$.

Now we explore the sufficiency according to the case $\sigma_0(T)=\emptyset$ or not.

Suppose firstly that $\sigma_0(T)=\emptyset$. It shows from $T\in$($UW${\scriptsize \it{E}}) that $\sigma_{a}(T)=\sigma_{ea}(T)$ and $E(T)=\emptyset$. Since $\hbox{ind}(T-\lambda I)\cdot \hbox{ind}(T-\mu I)\geq 0$ for any $\lambda,\mu\in\rho_{{SF}_{+}}(T)$, we see that $\sigma_{ea}(f(T))=f(\sigma_{ea}(T))$ for any $f\in Hol(\sigma(T))$. Then $\sigma_{a}(f(T))=f(\sigma_{a}(T))=f(\sigma_{ea}(T))=\sigma_{ea}(f(T))$. In view of $E(f(T))\subseteq f(E(T))$, $E(f(T))=\emptyset$. So $f(T)\in$($UW${\scriptsize \it{E}}).

Suppose secondly that $\sigma_0(T)\neq\emptyset$.
It follows from the decomposition of $\sigma_{b}(T)$ that $\hbox{ind}(T-\lambda I)\geq 0$ for any $\lambda\in\rho_{SF_+}(T)$, $\sigma_{w}(T)=\sigma_b(T)$ and $\hbox{iso}\sigma(T)=\sigma_0(T)$.  For any $f\in Hol(\sigma(T))$, choose arbitrarily $\mu_0\in\sigma_{a}(f(T))\backslash\sigma_{ea}(f(T))$.
Put $$f(T)-\mu_0 I=a(T-\lambda_{1}I)^{n_{1}}(T-\lambda_{2}I)^{n_{2}}\cdots(T-\lambda_{t}I)^{n_{t}}g(T),$$ where $\lambda_{i}\neq \lambda_{j}$ if $i\neq j$ and $g(T)$ is invertible. Since $\hbox{ind}(T-\lambda I)\geq 0$ for any $\lambda\in\rho_{SF_+}(T)$, $T-\lambda_i I$ is Weyl, and furthermore, Browder for $1\leq i\leq t$. Thus $f(T)-\mu_0 I$ is Browder.
Finally, we check $E(f(T))\subseteq \sigma_{a}(f(T))\backslash\sigma_{ea}(f(T))$.
Choose arbitrarily $\mu_0\in E(f(T))$. Again put $$f(T)-\mu_0 I=a(T-\lambda_{1}I)^{n_{1}}(T-\lambda_{2}I)^{n_{2}}\cdots(T-\lambda_{t}I)^{n_{k}}r(T),$$ where $\lambda_{i}\neq \lambda_{j}$ if $i\neq j$ and $r(T)$ is invertible. Without loss of generality we assume that $\lambda_i\in \sigma(T)$, $1\leq i\leq t$. Then $\lambda_i\in \hbox{iso}\sigma(T)=\sigma_0(T)$. Therefore, $f(T)-\mu_0 I$ is Browder. The proof is completed.
\end{proof}

In order to explore the transfer of property ($UW${\scriptsize \it{E}}) from $T$ to $f(T)$, it is necessary to give the following proposition as a preparation.

 \begin{prop} Let $T\in B(H)$. Then $f(T)\in$($UW${\scriptsize \it{E}}) for any $f\in$Hol$(\sigma(T))$ if and only if one of the following three cases occurs:

$1)$\ $\sigma(T)=[\overline{\sigma_{CI}(T)}\cap\sigma_{SF}(T)]\cup\Pi(T)\cup\hbox{acc}\sigma_{SF_+}(T)$;

$2)$\ $\sigma(T)=[\overline{\sigma_{CI}(T)}\cap\sigma_{ea}(T)]\cup\hbox{acc}[\rho_{CI}(T)\cap\sigma_{ea}(T)]\cup\Pi_a(T)$;

$3)$\ $\sigma_b(T)=[\overline{\sigma_{CI}(T)}\cap\sigma_{ea}(T)]\cup\hbox{acc}[\rho_{CI}(T)\cap\sigma_{ea}(T)]\cup\hbox{acc}\sigma_K(T)$.
 \end{prop}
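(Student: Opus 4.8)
The plan is to reduce the statement to Lemma~4.1 by exploiting the freedom of the functional calculus, and then to unpack that lemma's dichotomy into the three listed identities. The key preliminary remark is that taking $f$ to be the identity map shows that ``$f(T)\in(UW_E)$ for every $f\in\mathrm{Hol}(\sigma(T))$'' already forces $T\in(UW_E)$. Hence in both directions I may work under the standing conclusion $T\in(UW_E)$ and use Lemma~4.1, whose hypothesis is precisely this and whose two alternatives are (i)~$\sigma_0(T)=\emptyset$ together with the one-signed index condition on $\rho_{SF_+}(T)$, and (ii)~the $\sigma_b(T)$-decomposition. Alternative~(ii) is verbatim case~$3)$, so the real work is to show that alternative~(i) splits into exactly cases~$1)$ and~$2)$ according to the common sign of the index.

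For the necessity I assume $f(T)\in(UW_E)$ for all $f$, record $T\in(UW_E)$, and apply Lemma~4.1. If $\sigma_0(T)\neq\emptyset$ the lemma's second alternative holds and gives case~$3)$ directly. If $\sigma_0(T)=\emptyset$ then $E(T)=\emptyset$ and $\sigma_a(T)=\sigma_{ea}(T)$, and the index product condition forces every index on the open set $\rho_{SF_+}(T)$ to have one fixed sign. I then treat the two regimes separately: when all indices are $\le 0$ I aim to verify identity~$1)$, and when all indices are $\ge 0$ I aim to verify identity~$2)$. The inclusions ``right-hand side $\subseteq\sigma(T)$'' are automatic from $\sigma_{CI}(T),\sigma_{SF_+}(T)\subseteq\sigma(T)$ and closedness, so the content is the reverse inclusion, carried out by a punctured-neighbourhood argument: a point of $\rho_{SF_+}(T)$ is interior, hence cannot lie in $\mathrm{acc}\,\sigma_{SF_+}(T)$, and Lemma~2.3 pins down the CI-membership of bounded-below-non-invertible and surjective-non-invertible points, which is what lets me place each spectral point into the correct block.

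For the sufficiency I treat the three hypotheses in turn, and in each case I first recover $T\in(UW_E)$ together with the appropriate input for Lemma~4.1. Cases~$1)$ and~$2)$ both yield $\sigma_0(T)=\emptyset$ (a Riesz point or an isolated eigenvalue can be shown to miss every block of the respective right-hand side, using that non-closed-range and $(N\neq 0,\,R\neq H)$ points are CI by Lemma~2.3); the same punctured-neighbourhood bookkeeping then forces $E(T)=\emptyset$ and $\sigma_a(T)\setminus\sigma_{ea}(T)=\emptyset$, whence $T\in(UW_E)$, and additionally forbids indices of the ``wrong'' sign so that the first alternative of Lemma~4.1 applies. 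Case~$3)$ coincides with the $\sigma_b(T)$-identity of Lemma~4.1; to invoke the lemma I must still produce $T\in(UW_E)$, which I get from Proposition~3.1 by checking that each block of the case~$3)$ right-hand side is contained in $\sigma_{CI}(T)\cup\Pi_G(T)\cup[\mathrm{acc}\,\sigma(T)\cap\sigma_{ea}(T)]$ (the reverse containment is automatic), using closedness of $\sigma_{ea}(T)$ and $\mathrm{acc}\,\sigma_{CI}(T)\subseteq\mathrm{acc}\,\sigma(T)$. In every case Lemma~4.1 then delivers $f(T)\in(UW_E)$ for all $f$.

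The main obstacle is precisely this two-way translation between the analytic index-sign condition of Lemma~4.1 and the purely set-theoretic identities~$1)$ and~$2)$: showing that, under $T\in(UW_E)$ and $\sigma_0(T)=\emptyset$, ``all indices $\le 0$'' is \emph{equivalent} to identity~$1)$ and ``all indices $\ge 0$'' to identity~$2)$. This demands careful bookkeeping of how $\overline{\sigma_{CI}(T)}$, $\mathrm{acc}\,\sigma_{SF_+}(T)$, $\mathrm{acc}[\rho_{CI}(T)\cap\sigma_{ea}(T)]$ and the eigenvalue-free sets $\Pi(T)$, $\Pi_a(T)$ interact, and the crux is disentangling exactly which semi-Fredholm data the one-sided index hypothesis pins down. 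I would isolate that computation as a short auxiliary fact on the structure of $\rho_{SF_+}(T)$ under a fixed index sign before assembling the two implications.
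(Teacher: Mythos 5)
Your proposal is the paper's own argument: reduce to Lemma 4.1, let the $\sigma_0(T)\neq\emptyset$ alternative give case $3)$, and identify the two index signs on $\rho_{SF_+}(T)$ with identities $1)$ and $2)$ — exactly the paper's Claims I and II — via $E(T)=\emptyset$, $\sigma_a(T)=\sigma_{ea}(T)$, openness of $\rho_{SF_+}(T)$ and the Lemma 2.3 trichotomy. The ``auxiliary fact'' you defer is precisely that pair of claims, and the ingredients you name are the ones the paper uses to prove them, so the plan is correct and coincides with the paper's proof.
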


\begin{proof}
If $\sigma_0(T)\neq\emptyset$, then $f(T)\in$($UW${\scriptsize \it{E}}) for any $f\in$Hol$(\sigma(T))$ if and only if $\sigma_b(T)=[\overline{\sigma_{CI}(T)}\cap\sigma_{ea}(T)]\cup\hbox{acc}[\rho_{CI}(T)\cap\sigma_{ea}(T)]\cup\hbox{acc}\sigma_K(T)$.
If $\sigma_0(T)=\emptyset$ and $f(T)\in$($UW${\scriptsize \it{E}}) for any $f\in$Hol$(\sigma(T))$, then $\sigma_{a}(T)=\sigma_{ea}(T)$ and $E(T)=\emptyset$, and meanwhile,
we have $\hbox{ind}(T-\lambda I)\cdot \hbox{ind}(T-\mu I)\geq 0$ for any $\lambda,\mu\in\rho_{{SF}_{+}}(T)$.

Therefore, it suffices to check the following two assertions.

 Claim I: The validity of the relation $\hbox{ind}(T-\lambda I)\leq 0$ for any $\lambda\in\rho_{SF_+}(T)$ is equivalent to the equality $\sigma(T)=[\overline{\sigma_{CI}(T)}\cap\sigma_{SF}(T)]\cup\Pi(T)\cup\hbox{acc}\sigma_{SF_+}(T)$.

 Indeed, assume on the contrary that there is some $\lambda_0\in\rho_{SF_+}(T)$ such that $\hbox{ind}(T-\lambda_0 I)> 0$,  one can see $\lambda_0\notin[\overline{\sigma_{CI}(T)}\cap\sigma_{SF}(T)]\cup\Pi(T)\cup\hbox{acc}\sigma_{SF_+}(T)$,
 and furthermore, doesn't belong to $\sigma(T)$, which is a contradiction.
In contrast, choose arbitrarily $\lambda_0\notin[\overline{\sigma_{CI}(T)}\cap\sigma_{SF}(T)]\cup\Pi(T)\cup\hbox{acc}\sigma_{SF_+}(T)$.
Now we prove that $\lambda_0\notin\sigma(T)$. Suppose otherwise that $\lambda_0\in\sigma(T)$, we can see $n(T-\lambda_0 I)> 0$, that is, $\lambda_0\in\sigma_a(T)$. Then we have the following two cases.
\begin{enumerate}
  \item $\lambda_0\notin\sigma_{SF}(T)$. It follows from the perturbation theory of semi-Fredholm operators \cite[XI. 3.11]{fm} that $\lambda_0\in\sigma_{a}(T)\backslash\sigma_{ea}(T)$, which contradicts with ``$\sigma_{a}(T)=\sigma_{ea}(T)$".
  \item $\lambda_0\notin\overline{\sigma_{CI}(T)}$. Then $\lambda\in\rho_{CI}(T)\cap\rho_{SF_+}(T)=\rho(T)$ if $0<\lvert \lambda-\lambda_0\rvert$ small enough. Then $\lambda_0\in\hbox{iso}\sigma(T)$. It deduces from $n(T-\lambda_0 I)> 0$ that $\lambda_0\in E(T)$, which contradicts with
`` $E(T)=\emptyset$".
\end{enumerate}
Therefore we conclude that $\lambda_0\notin\sigma(T)$. The claim I is proved.

 Claim II: The validity of the relation $\hbox{ind}(T-\lambda I)\geq 0$ for any $\lambda\in\rho_{SF_+}(T)$ is equivalent to the equality $\sigma(T)=[\overline{\sigma_{CI}(T)}\cap\sigma_{ea}(T)]\cup\hbox{acc}[\rho_{CI}(T)\cap\sigma_{ea}(T)]\cup\Pi_a(T)$.
 
 Indeed, analogous to the Claim I above, the sufficiency is clear. For the necessity, choose arbitrarily $\lambda_0\notin[\overline{\sigma_{CI}(T)}\cap\sigma_{ea}(T)]\cup\hbox{acc}[\rho_{CI}(T)\cap\sigma_{ea}(T)]\cup\Pi_a(T)$,
then $\lambda_0\notin\sigma(T)$.
Assume otherwise that $\lambda_0\in\sigma(T)$, then $n(T-\lambda_0 I)>0$ and $\lambda_0 \notin\overline{\sigma_{CI}(T)}$ for the reason of $\sigma_a(T)=\sigma(T)=\sigma_{ea}(T)$. It follows that $\lambda\in\rho_{CI}(T)\cap\rho_{ea}(T)=\rho(T) $ if $0<\lvert \lambda-\lambda_0\rvert$ small enough. Thus $\lambda_0\in E(T)$, which contradicts with `` $E(T)=\emptyset$". The claim II is proved.  
Thus a desired result emerges.
 \end{proof}

 Based on the proof of Proposition 4.2, the transfer of property ($UW${\scriptsize \it{E}}) from $T$ to $f(T)$ can be described as follows.

\begin{thm} Suppose $T\in$($UW${\scriptsize \it{E}}). Then
$f(T)\in$($UW${\scriptsize \it{E}}) for any $f\in$Hol$(\sigma(T))$ if and only if one of the following three cases occurs:

$1)$\   $\sigma_{CI}(T)=\rho_s(T)\cap\sigma(T)$, where $\rho_{s}(T)=\{\lambda\in{\Bbb C}: T-\lambda I$ is surjective$\}$;

$2)$\   $\rho_{SF_+}(T)\cap\sigma(T)\subseteq\sigma_{CI}(T)$ and $\sigma_{CI}(T)=[\rho_{a}(T)\cap\sigma(T)]\cup[\rho_{s}(T)\cap\{\lambda\in{\Bbb C}: n(T-\lambda I)=\infty\}]$;

$3)$\   $\sigma_b(T)=[\overline{\sigma_{CI}(T)}\cap\sigma_{ea}(T)]\cup\hbox{acc}[\rho_{CI}(T)\cap\sigma_{ea}(T)]\cup\hbox{acc}\sigma_K(T)$.
 \end{thm}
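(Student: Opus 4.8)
The plan is to deduce Theorem 4.3 from Proposition 4.2 by rewriting each of its three spectral equalities in terms of the CI spectrum, using the standing hypothesis $T\in(UW_E)$. Since case $3)$ of the present statement coincides with condition $3)$ of Proposition 4.2, it will suffice to prove, under $T\in(UW_E)$, that case $2)$ here is equivalent to condition $1)$ there and that case $1)$ here is equivalent to condition $2)$ there; the disjunction of the three alternatives will then match on the two sides. Throughout I would use the description recorded after Lemma 2.3, namely that $\lambda\in\sigma_{CI}(T)$ precisely when $T-\lambda I$ is bounded below but not invertible or surjective but not invertible.

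First I would reuse Claims I and II from the proof of Proposition 4.2: in the presence of $T\in(UW_E)$ and $\sigma_0(T)=\emptyset$ (so that $\sigma_a(T)=\sigma_{ea}(T)$ and $E(T)=\emptyset$), condition $1)$ of Proposition 4.2 is equivalent to $\hbox{ind}(T-\lambda I)\le 0$ for every $\lambda\in\rho_{SF_+}(T)$, and condition $2)$ to $\hbox{ind}(T-\lambda I)\ge 0$ for every such $\lambda$. I would then record the dichotomy forced by $T\in(UW_E)$: for $\lambda\in\rho_{SF_+}(T)$ with $n(T-\lambda I)>0$, either $\lambda\in\sigma_a(T)\setminus\sigma_{ea}(T)=E(T)$ --- in which case $\lambda$ is an isolated point of $\sigma(T)$ at which $T-\lambda I$ is semi-Fredholm, hence a Riesz point by \cite[Theorem 3.6]{bd} with $\hbox{ind}(T-\lambda I)=0$ --- or $\hbox{ind}(T-\lambda I)>0$. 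In particular $E(T)=\sigma_0(T)$, and since $(UW_E)$ yields Browder's theorem one also has $\sigma_w(T)=\sigma_b(T)$.

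For case $2)$ I would argue that its first clause $\rho_{SF_+}(T)\cap\sigma(T)\subseteq\sigma_{CI}(T)$ forbids Riesz points (such a point lies in $\rho_{SF_+}(T)\cap\sigma(T)$ but, being neither injective nor surjective, not in $\sigma_{CI}(T)$), whence $\sigma_0(T)=\emptyset$; combining the dichotomy with both clauses then forces $\hbox{ind}(T-\lambda I)\le 0$ on $\rho_{SF_+}(T)$, that is, condition $1)$ of Proposition 4.2 by Claim I. Conversely, condition $1)$ of Proposition 4.2 gives $\sigma_0(T)=\emptyset$ and, via the dichotomy, that every $\lambda\in\rho_{SF_+}(T)$ is bounded below, from which both clauses of case $2)$ read off the CI description at once. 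This settles case $2)\Leftrightarrow$ Proposition 4.2\,$1)$.

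The delicate point --- and what I expect to be the main obstacle --- is case $1)$. Its reverse half is immediate: condition $2)$ of Proposition 4.2 provides $\sigma_0(T)=\emptyset$ and $\hbox{ind}(T-\lambda I)\ge 0$ on $\rho_{SF_+}(T)$, so there is no bounded-below-but-not-invertible point and $\sigma_{CI}(T)=\rho_s(T)\cap\sigma(T)$. The hard direction is to show that the identity $\sigma_{CI}(T)=\rho_s(T)\cap\sigma(T)$ indeed places $T$ in the regime of Proposition 4.2\,$2)$: this identity only says that $T-\lambda I$ is never bounded below but not invertible, so once $\sigma_0(T)=\emptyset$ the dichotomy upgrades it to $\hbox{ind}(T-\lambda I)\ge 0$ and Claim II applies, but the identity does not, on its face, exclude isolated points of $\sigma(T)$ at which $T-\lambda I$ fails to be semi-Fredholm. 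The crux is therefore to control precisely these isolated spectral points and confirm that, under $T\in(UW_E)$, case $1)$ forces $\sigma_0(T)=\emptyset$ so that Claim II becomes available; exploiting the full interaction between the CI spectrum, the accumulation of $\sigma_K(T)$ appearing in Proposition 4.2, and property $(UW_E)$ at this step is where the argument must be most careful. Once it is in place, $1)\vee 2)\vee 3)$ matches the three conditions of Proposition 4.2 and the theorem follows.
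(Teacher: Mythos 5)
Your reduction to Proposition 4.2 is the right frame, and your treatment of case $2)$ is sound: the clause $\rho_{SF_+}(T)\cap\sigma(T)\subseteq\sigma_{CI}(T)$ does kill Riesz points (a Riesz point has $0<n=d<\infty$, so it lies in $\rho_{SF_+}(T)\cap\sigma(T)$ but is neither bounded below nor surjective), after which your dichotomy and Claim I apply. The problem is the step you yourself flag as ``the crux'' and then leave open: showing that case $1)$, i.e.\ $\sigma_{CI}(T)=\rho_s(T)\cap\sigma(T)$, together with $T\in(UW${\scriptsize \it E}$)$, yields $f(T)\in(UW${\scriptsize \it E}$)$ for all $f$. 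That step is not merely delicate --- as stated it fails, so no amount of care with isolated points will close it. Take $T=(V+3I)\oplus B\oplus 2I_{\mathbb C}$ on $L^2[0,1]\oplus\ell^2\oplus{\mathbb C}$, where $V$ is the Volterra operator (injective, quasinilpotent, non-closed range) and $B$ is the backward shift. Then $\sigma(T)=\overline{D(0,1)}\cup\{2\}\cup\{3\}$, $\sigma_a(T)\setminus\sigma_{ea}(T)=\{2\}=E(T)$ (note $3\notin\sigma_p(T)$ and $T-3I\notin\Phi_+(H)$), so $T\in(UW${\scriptsize \it E}$)$; and $\sigma_{CI}(T)=D(0,1)=\rho_s(T)\cap\sigma(T)$, since at every spectral point $T-\lambda I$ is either surjective, an eigenvalue, or has non-closed range. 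So case $1)$ holds. Yet for $f(z)=(z-2)(z-3)$ one has $0\in\hbox{iso}\sigma(f(T))\cap\sigma_p(f(T))=E(f(T))$ while $R(f(T))$ is not closed (because of the Volterra block), so $0\in\sigma_{ea}(f(T))$ and $f(T)\notin(UW${\scriptsize \it E}$)$.

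The source of the failure is exactly the point you isolated: $\sigma_{CI}(T)=\rho_s(T)\cap\sigma(T)$ says nothing about isolated spectral points at which $T-\lambda I$ is neither semi-Fredholm nor injective-with-closed-range (here $\lambda=3$), nor does it force $\sigma_0(T)=\emptyset$ (here $\sigma_0(T)=\{2\}$); by contrast, condition $2)$ of Proposition 4.2 does exclude both, because such a $\lambda$ can only be absorbed into $\Pi_a(T)\cup\hbox{acc}[\cdots]$, and the Riesz point $2$ is absorbed into none of the three pieces. So your claimed equivalence ``case $1)$ of the theorem $\Leftrightarrow$ condition $2)$ of Proposition 4.2'' only holds in one direction. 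The paper gives no argument for Theorem 4.3 beyond ``based on the proof of Proposition 4.2,'' so there is nothing to rescue the step from; to make the sufficiency of case $1)$ true you would have to add a hypothesis such as $\sigma_0(T)=\emptyset$ (or otherwise rule out isolated points like $3$ above), at which point the dichotomy argument you sketch does go through. As written, the proposal does not prove the stated theorem, and the statement itself appears to need repair.
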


So far, the transfer of property ($UW${\scriptsize \it{E}}) from $T$ to $f(T)$ for any $f\in$Hol$(\sigma\\(T))$ has been achieved. As might be expected,
it cannot conclude that both $f(T)$ and  $f(T^{\ast})$ are in ($UW${\scriptsize \it{E}}) for any $f\in$Hol$(\sigma(T))$ simultaneously even if $T$ and $T^{\ast}$ are both in ($UW${\scriptsize \it{E}}), such as the $T_5$ mentioned at the beginning of this section.
Moreover, $f(T)\in$($UW${\scriptsize \it{E}})  for any $f\in$Hol$(\sigma(T))$ and  $f(T^{\ast})\in$($UW${\scriptsize \it{E}}) for any $f\in$Hol$(\sigma(T))$ cannot be extrapolated from each other. For instance, put $T_6\in B(\ell^2)$ be defined by:
$T_6(x_1, x_2, x_3, \cdots)=(0, x_1,0, \frac{x_2}{2}, \cdots)$, then $f(T_6)\in$($UW${\scriptsize \it{E}}) for any  $f\in$Hol$(\sigma(T_6))$ but ${T_6}^{\ast}\notin$($UW${\scriptsize \it{E}}).
The ultimate aim of this section is to achieve the transfer of property ($UW${\scriptsize \it{E}}) from $T$ to $f(T)$ and $f(T^{\ast})$ simultaneously, which also arouses our interest in exploring whether $f(T)$ and $f(T^{\ast})$ are both in ($UW${\scriptsize \it{E}}) for any $f\in$Hol$(\sigma(T))$ simultaneously.

\begin{prop}  Let $T\in B(H)$. Then both $f(T)$ and $f(T^{\ast})$ are in ($UW${\scriptsize \it{E}}) for any $f\in$Hol$(\sigma(T))$ if and only if one of the following four cases occurs:

$1)$\  $\sigma(T)=[\overline{\sigma_{CI}(T)}\cap\sigma_{SF}(T)]\cup\Pi^{\ast}(T)\cup\hbox{acc}\sigma_{SF}(T)\cup[\rho_a(T)\cap\sigma(T)]$;

$2)$\  $\sigma(T)=[\overline{\sigma_{CI}(T)}\cap\sigma_{SF}(T)]\cup\Pi^{\ast}(T)\cup\hbox{acc}\sigma_{SF}(T)\cup[\rho_s(T)\cap\sigma(T)]$;

$3)$\  $\sigma(T)=[\overline{\sigma_{CI}(T)}\cap\sigma_{SF}(T)]\cup\Pi^{\ast}(T)\cup\hbox{acc}\sigma_{SF}(T)\cup[\rho_a(T)\cap\sigma_{SF_-}(T)]\cup[\rho_s(T)\cap\sigma_{SF_+}(T)]$;

$4)$\  $\sigma_b(T)=\hbox{acc}\sigma(T)\cap\sigma_{SF}(T)$.
  
\end{prop}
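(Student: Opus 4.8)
The plan is to prove Proposition 4.4 by applying Proposition 4.2 twice, to $T$ and to $T^{\ast}$, and then amalgamating the two conclusions. The first step is to record that $f(T)\in(UW_{E})$ for every $f\in\hbox{Hol}(\sigma(T))$ \emph{and} $f(T^{\ast})\in(UW_{E})$ for every such $f$ is exactly the conjunction of the two single-operator transfer properties, and to convert the $T^{\ast}$-property into a statement about $T$ through the conjugation dictionary: $\sigma(T^{\ast})=\overline{\sigma(T)}$, $\hbox{ind}(T^{\ast}-\overline{\lambda}I)=-\hbox{ind}(T-\lambda I)$, $\sigma_{SF_+}(T^{\ast})=\overline{\sigma_{SF_-}(T)}$, $\rho_a(T^{\ast})=\overline{\rho_s(T)}$, while $\sigma_{CI}$, $\sigma_{SF}$, $\sigma_w$, $\sigma_b$, $\sigma_K$ and $\sigma_0$ are conjugate-invariant (the self-duality of $\sigma_{CI}$ follows from Lemma 2.3, since each of the three alternatives there is preserved under adjoints). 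Under this dictionary Proposition 4.2 for $T^{\ast}$ becomes a mirror list of three conditions on $T$ in which every index inequality is reversed and $\rho_{SF_+}(T)$ is replaced by $\rho_{SF_-}(T)$.

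Next I would split on whether $\sigma_0(T)=\emptyset$, which by the dictionary is the same as $\sigma_0(T^{\ast})=\emptyset$. If $\sigma_0(T)\neq\emptyset$, both transfer properties force $T$ and $T^{\ast}$ into case $3)$ of Proposition 4.2. For $T$ this gives $\hbox{ind}(T-\lambda I)\geq0$ on $\rho_{SF_+}(T)$, $\sigma_w(T)=\sigma_b(T)$ and $\hbox{iso}\sigma(T)=\sigma_0(T)$; the mirror condition for $T^{\ast}$ gives $\hbox{ind}(T-\lambda I)\leq0$ on $\rho_{SF_-}(T)$. Together these force $\hbox{ind}(T-\lambda I)=0$ throughout the semi-Fredholm region, so $\sigma_{SF}(T)=\sigma_w(T)=\sigma_b(T)$; combined with $\hbox{iso}\sigma(T)=\sigma_0(T)$, which yields $\sigma_b(T)\subseteq\hbox{acc}\sigma(T)$, this collapses to $\sigma_b(T)=\hbox{acc}\sigma(T)\cap\sigma_{SF}(T)$, namely case $4)$.

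If instead $\sigma_0(T)=\emptyset$, then $T,T^{\ast}\in(UW_{E})$ yield $\sigma_a(T)=\sigma_{ea}(T)$, $E(T)=\emptyset$ and their duals, and the two transfer properties reduce, via Claims I and II in the proof of Proposition 4.2 and their mirrors, to the statement that the index is one-signed on $\rho_{SF_+}(T)$ and one-signed on $\rho_{SF_-}(T)$. I would then run through the four sign patterns. Nonpositive on both regions gives $\hbox{ind}\leq0$ throughout and produces case $1)$, the term $\rho_a(T)\cap\sigma(T)$ recording the negative-index points; nonnegative on both gives case $2)$ with the term $\rho_s(T)\cap\sigma(T)$; nonpositive on $\rho_{SF_+}(T)$ together with nonnegative on $\rho_{SF_-}(T)$ forces index $0$ on the Fredholm part and singles out the infinite-index boundary points, yielding case $3)$ with its two terms $\rho_a(T)\cap\sigma_{SF_-}(T)$ and $\rho_s(T)\cap\sigma_{SF_+}(T)$; and the remaining pattern (nonnegative on $\rho_{SF_+}(T)$, nonpositive on $\rho_{SF_-}(T)$) forces every semi-Fredholm point to be Weyl and is thereby absorbed into case $3)$. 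In each branch the passage from the index-sign statement to the displayed spectral-set equality is obtained by repeating the Claim I / Claim II computations, now with $\Pi^{\ast}(T)$ replacing $\Pi(T)$ or $\Pi_a(T)$, because imposing the property on $T$ and on $T^{\ast}$ at once forces both $n(T-\lambda I)=0$ and $n(T^{\ast}-\overline{\lambda}I)=0$ at the relevant points.

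The step I expect to be the main obstacle is the bookkeeping of this last paragraph: checking that the two one-signed conditions on $\rho_{SF_+}(T)$ and $\rho_{SF_-}(T)$ translate precisely into the four displayed equalities, in particular handling the infinite-index boundary points $\rho_a(T)\cap\sigma_{SF_-}(T)$ and $\rho_s(T)\cap\sigma_{SF_+}(T)$ that distinguish case $3)$ from cases $1)$ and $2)$, and confirming that the fourth sign pattern really collapses rather than producing a fifth alternative. Once the conjugation dictionary and these index-sign reformulations are in place, the converse direction — that each of the four equalities feeds back, through Proposition 4.2 and its mirror, into both transfer properties — should be routine.
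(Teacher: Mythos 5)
Your proposal is correct and follows essentially the same route as the paper: split on whether $\sigma_0(T)=\emptyset$, reduce the simultaneous transfer to one-signedness of the index on $\rho_{SF_+}(T)$ and on $\rho_{SF_-}(T)$, run the four sign patterns (with the fourth collapsing because both exceptional index sets are empty, forcing index zero on $\rho_{SF}(T)$), and convert each pattern into the displayed set equality by the Claim I/II computations with $\Pi^{\ast}(T)$. The only cosmetic difference is that you package the input as Proposition 4.2 applied to $T$ and $T^{\ast}$ via a conjugation dictionary, whereas the paper works directly from Lemma 4.1, and you absorb the degenerate fourth pattern into case 3) while the paper folds it into cases 1)/2) — immaterial, since all three coincide when the index vanishes identically on $\rho_{SF}(T)$.
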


\begin{proof} There are some facts as follows from Lemma 4.1.

If $\sigma_0(T)\neq\emptyset$, then $f(T)\in$($UW${\scriptsize \it{E}}) for any $f\in$Hol$(\sigma(T))$ if and only if $\sigma_{ea}(T)=\sigma_{b}(T)$ and $\hbox{iso}\sigma(T)=\sigma_0(T)$. Hence $f(T)$ and  $f(T^{\ast})$ are in ($UW${\scriptsize \it{E}}) simultaneously for any $f\in$Hol$(\sigma(T))$ if and only if $\sigma_{b}(T)=\hbox{acc}\sigma(T)\cap\sigma_{SF}(T)$ provided that $\sigma_0(T)\neq\emptyset$.

If $\sigma_0(T)=\emptyset$ and both $f(T)$ and $f(T^{\ast})$ are in ($UW${\scriptsize \it{E}})  for any $f\in$Hol$(\sigma(T))$, then $\sigma_a(T)=\sigma_{ea}(T)$ and $E(T)=\emptyset$ simultaneously with $\sigma_a(T^{\ast})=\sigma_{ea}(T^{\ast})$ and $E(T^{\ast})=\emptyset$. Meanwhile,  
we obtain the following two contents,

 $(1)$\ the product $\hbox{ind}(T-\lambda I)\cdot \hbox{ind}(T-\mu I)\geq 0$ for any $\lambda,\mu\in\rho_{{SF}_{+}}(T)$
 \\and
 
$(2)$\ the product $\hbox{ind}(T-\lambda^{\prime} I)\cdot \hbox{ind}(T-\mu^{\prime} I)\geq 0$ for any $\lambda^{\prime},\mu^{\prime}\in\rho_{{SF}_{-}}(T)$,\\ from which we will make further classifications as follows.
\begin{enumerate}
\item  $\hbox{ind}(T-\lambda I)\leq 0$ for any $\lambda\in\rho_{SF}(T)$.

Since $\sigma_a(T)=\sigma_{ea}(T)$, $T-\lambda I$ is bounded below for any $\lambda\in\rho_{SF}(T)$. It shows that $\rho_{CI}(T)\cap\rho_{SF}(T)=\rho(T)$, and hence $\hbox{acc}\sigma(T)=\overline{\sigma_{CI}(T)}\cup\hbox{acc}\sigma_{SF}(T)$.
Now we claim the following two statements are equivalent:\\
 $(i)$\ $\sigma_0(T)=\emptyset$ and $T^{\ast}$ and $T$ are both in ($UW${\scriptsize \it{E}}) with $\hbox{ind}(T-\lambda I)\leq 0$ for any $\lambda\in\rho_{SF}(T)$; \\
 $(ii)$\ $\sigma(T)=[\overline{\sigma_{CI}(T)}\cap\sigma_{SF}(T)]\cup\Pi^{\ast}(T)\cup\hbox{acc}\sigma_{SF}(T)\cup[\rho_a(T)\cap\sigma(T)]$.

Indeed, assume that $\sigma_0(T)=\emptyset$ and $T^{\ast}$ and $T$ are in ($UW${\scriptsize \it{E}}), then $\sigma(T)=\Pi^{\ast}(T)\cup\hbox{acc}\sigma(T)$. Besides, note that  $\hbox{acc}\sigma(T)=\overline{\sigma_{CI}(T)}\cup\hbox{acc}\sigma_{SF}(T)$, we can deduce that
$\sigma(T)\subseteq\Pi^{\ast}(T)\cup\hbox{acc}\sigma_{SF}(T)\cup[\overline{\sigma_{CI}(T)}\cap\sigma_{SF}(T)]\cup[\sigma(T)\cap\rho_{a}(T)]$. The other inclusion is apparent.
Conversely, it concludes that $\big\{[\sigma_a(T)\backslash\sigma_{ea}(T)]\cup E(T)\big\}\cap\sigma(T)=\emptyset$ for the reason of $\big\{[\sigma_a(T)\backslash\sigma_{ea}(T)]\cup E(T)\big\}\cap\big\{[\overline{\sigma_{CI}(T)}\cap\sigma_{SF}(T)]\cup\Pi^{\ast}(T)\cup\hbox{acc}\sigma_{SF}(T)\cup[\rho_a(T)\cap\sigma(T)]\big\}=\emptyset$.
Thus $T\in$($UW${\scriptsize \it{E}}) and $\sigma_0(T)=\emptyset$.
Similarly, one can also see that $T^{\ast}$ are in ($UW${\scriptsize \it{E}}). The rest is straightforward to verify and thereby omitted. The claim is proved.

\item $\hbox{ind}(T-\lambda I)\geq 0$ for any $\lambda\in\rho_{SF}(T)$.

Note that $\sigma_a(T^{\ast})=\sigma_{ea}(T^{\ast})$. Then $T-\lambda I$ is surjective, namely, $\rho_{SF}(T)=\rho_s(T)$. Similar to the case 1 above, we deduce that $\hbox{acc}\sigma(T)=\overline{\sigma_{CI}(T)}\cup\hbox{acc}\sigma_{SF}(T)$.
Now it suffices to check the following two statements are equivalent:\\
  $(i)$\ $\sigma_0(T)=\emptyset$ and $T^{\ast}$ and $T$ are in ($UW${\scriptsize \it{E}}) with $\hbox{ind}(T-\lambda I)\geq 0$ for any $\lambda\in\rho_{SF}(T)$;\\
  $(ii)$\ $\sigma(T)=[\overline{\sigma_{CI}(T)}\cap\sigma_{SF}(T)]\cup\Pi^{\ast}(T)\cup\hbox{acc}\sigma_{SF}(T)\cup[\rho_s(T)\cap\sigma(T)]$.
The proof is similar to the claim in case 1 above, and thus we omit it.

\item  $\hbox{ind}(T-\lambda I)\leq 0$ for any $\lambda\in\rho_{SF_+}(T)$, and
$\hbox{ind}(T-\mu I)\geq 0$ for any $\mu\in\rho_{SF_-}(T)$.

 Under this circumstance, we have three statements as follows:
 $(a)$\ $\rho_{SF_+}(T)=\rho(T)\cup[\rho_a(T)\cap\sigma_{SF_-}(T)]$ and $\rho_{SF_-}(T)=\rho(T)\cup[\rho_s(T)\cap\sigma_{SF_+}(T)]$;
  $(b)$\  $\sigma_{CI}(T)=[\rho_a(T)\cap\sigma_{SF_-}(T)]\cup[\rho_s(T)\cap\sigma_{SF_+}(T)]$;
  $(c)$\  $\hbox{acc}\sigma(T)=\overline{\sigma_{CI}(T)}\cup\hbox{acc}\sigma_{SF}(T)$.
Now it suffices to prove the following two statements are equivalent:\\
$(i)$\  $\sigma_0(T)=\emptyset$ and $T^{\ast}$ and $T$ are in ($UW${\scriptsize \it{E}}) with $\hbox{ind}(T-\lambda I)\leq 0$ for any $\lambda\in\rho_{SF_+}(T)$ and
$\hbox{ind}(T-\mu I)\geq 0$ for any $\mu\in\rho_{SF_-}(T)$;\\
 $(ii)$\ $\sigma(T)=[\overline{\sigma_{CI}(T)}\cap\sigma_{SF}(T)]\cup\Pi^{\ast}(T)\cup\hbox{acc}\sigma_{SF}(T)\cup[\rho_a(T)\cap\sigma_{SF_-}(T)]\cup[\rho_s(T)\cap\sigma_{SF_+}(T)]$.\\
Similar to the claim in case 1, the proof is not presented here.

\item $\hbox{ind}(T-\lambda I)\geq 0$ for any $\lambda\in\rho_{SF_+}(T)$,
and $\hbox{ind}(T-\mu I)\leq 0$ for any $\mu\in\rho_{SF_-}(T)$.

  It is clear to check that the two sets $\{\lambda\in\rho_{SF_+}(T): \hbox{ind}(T-\lambda I)> 0\}$ and $\{\mu\in\rho_{SF_-}(T): \hbox{ind}(T-\mu I)<0\}$ are both empty, and hence $\hbox{ind}(T-\lambda I)=0$ for any $\lambda\in\rho_{SF}(T)$.
 This may boil down to either case 1 or case 2.
 \end{enumerate}
 The proof is completed.
 \end{proof}

\begin{rem}Using Proposition 4.4 one can construct an operator $T\in B(H)$ such that $f(T)$ and $f({T}^{\ast})$ are both in ($UW${\scriptsize \it{E}}) whenever $f\in$Hol$(\sigma(T))$. Indeed, let $T_7\in B(\ell^2)$ be defined by:
$T_7(x_1, x_2, x_3, \cdots)=( x_1, \frac{x_2}{2},\frac{x_3}{3},\cdots)$. One has $\sigma_b(T_7)=\{0\}=\hbox{acc}\sigma(T_7)\cap\sigma_{SF}(T_7)$, and hence $f(T_7)$ and $f({T_7}^{\ast})$ are both in ($UW${\scriptsize \it{E}}) for any  $f\in$Hol$(\sigma(T_7))$.
\end{rem}

From the proof of Proposition 4.4, we can characterize the transfer of property ($UW${\scriptsize \it{E}}) from $T$ to $f(T)$ and $f(T^{\ast})$ simultaneously as follows.

\begin{thm} Suppose $T\in$($UW${\scriptsize \it{E}}). Then both $f(T)$ and $f(T^{\ast})$ are in ($UW${\scriptsize \it{E}}) for any $f\in$Hol$(\sigma(T))$ if and only if one of the following two cases occurs:

$1)$\   $\sigma(T)=\sigma_{CI}(T)\cup\sigma_{SF}(T)$ and $\sigma(T)=\Pi^{\ast}(T)\cup\hbox{acc}\sigma(T)$  with one of the following three cases holds:

\quad$i)$\   $\sigma_{CI}(T)=\rho_a(T)\cap\sigma(T)$;

\quad$ii)$\   $\sigma_{CI}(T)=\rho_s(T)\cap\sigma(T)$;

 \quad$iii)$\  $\sigma_{CI}(T)=[\rho_a(T)\cap\sigma_{SF_-}(T)]\cup[\rho_s(T)\cap\sigma_{SF_+}(T)]$.

$2)$\   $\sigma_b(T)=\hbox{acc}\sigma(T)\cap\sigma_{SF}(T)$.

 \end{thm}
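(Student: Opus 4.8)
The plan is to deduce Theorem 4.6 from Proposition 4.4 by specializing to the standing hypothesis $T\in(UW_E)$. Proposition 4.4 already characterizes the property ``both $f(T)$ and $f(T^{\ast})$ lie in ($UW_E$) for every $f\in\mathrm{Hol}(\sigma(T))$'' by its four cases, so it suffices to show that, under $T\in(UW_E)$, the disjunction of those four cases is equivalent to the disjunction of the two cases of Theorem 4.6. The correspondence I expect is: case $4$ of Proposition 4.4, namely $\sigma_b(T)=\mathrm{acc}\,\sigma(T)\cap\sigma_{SF}(T)$, is literally case $2$ of Theorem 4.6 and needs no argument; while cases $1$, $2$, $3$ of Proposition 4.4 (all living in the regime $\sigma_0(T)=\emptyset$) should collapse into case $1$ of Theorem 4.6, with the three index regimes producing exactly the subcases i), ii), iii).

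First I would extract the two ``common'' equalities of case $1$ of Theorem 4.6 from the $\sigma_0(T)=\emptyset$ analysis underlying Proposition 4.4. Since $T$ and $T^{\ast}$ are both in ($UW_E$), the opening line of the proof of Theorem 3.5 gives $\sigma_b(T)=\Pi^{\ast}(T)\cup\mathrm{acc}\,\sigma(T)$; and because $\sigma_0(T)=\sigma(T)\setminus\sigma_b(T)$, the hypothesis $\sigma_0(T)=\emptyset$ upgrades this to $\sigma(T)=\Pi^{\ast}(T)\cup\mathrm{acc}\,\sigma(T)$. For the second equality $\sigma(T)=\sigma_{CI}(T)\cup\sigma_{SF}(T)$, equivalently $\rho_{CI}(T)\cap\rho_{SF}(T)=\rho(T)$, I would argue as in Proposition 4.4 that if $T-\lambda I$ is CI and semi-Fredholm then, using $\sigma_a(T)=\sigma_{ea}(T)$ together with its adjoint version and the relevant index-sign condition, the trichotomy of Lemma 2.3 forces $T-\lambda I$ to be invertible, since each non-invertible CI alternative violates either the index sign or boundedness below/surjectivity.

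Next I would pin down the subcase by computing $\sigma_{CI}(T)$ in each index regime, using that $\lambda\in\sigma_{CI}(T)$ exactly when $T-\lambda I$ is bounded below but not invertible, or surjective but not invertible. When $\mathrm{ind}(T-\lambda I)\le 0$ on $\rho_{SF}(T)$, a surjective non-invertible point would have strictly positive index, so such points are absent and $\sigma_{CI}(T)=\rho_a(T)\cap\sigma(T)$, which is subcase i); the mirror argument under $\mathrm{ind}(T-\lambda I)\ge 0$ yields $\sigma_{CI}(T)=\rho_s(T)\cap\sigma(T)$, subcase ii); and the mixed regime gives $\sigma_{CI}(T)=[\rho_a(T)\cap\sigma_{SF_-}(T)]\cup[\rho_s(T)\cap\sigma_{SF_+}(T)]$, which is precisely statement $(b)$ recorded inside case $3$ of the proof of Proposition 4.4, i.e.\ subcase iii). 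Conversely, each subcase equation together with $\sigma(T)=\sigma_{CI}(T)\cup\sigma_{SF}(T)$ and the identity $\mathrm{acc}\,\sigma(T)=\overline{\sigma_{CI}(T)}\cup\mathrm{acc}\,\sigma_{SF}(T)$ (which follows from the openness of $\sigma_{CI}(T)$ and $\sigma(T)=\sigma_{CI}(T)\cup\sigma_{SF}(T)$) should reconstitute the corresponding union-decomposition of Proposition 4.4, closing the loop.

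I expect the main obstacle to be the bookkeeping in this translation between the two formats: Proposition 4.4 phrases its cases as unions covering $\sigma(T)$, whereas Theorem 4.6 phrases them as equations for $\sigma_{CI}(T)$. The delicate point is to verify that each subcase equation on $\sigma_{CI}(T)$ faithfully encodes the underlying index-sign hypothesis and that, after substituting it back, no spectral point is dropped or double-counted. The key tools here are Lemma 2.3 (the CI trichotomy), the openness of $\sigma_{CI}(T)$, and the observation that in the subcase iii) regime $\sigma_{CI}(T)\subseteq\rho_{SF}(T)$, so that $\sigma_{CI}(T)$ and $\sigma_{SF}(T)$ partition $\sigma(T)$; handling the boundary terms $\overline{\sigma_{CI}(T)}\cap\sigma_{SF}(T)$ and $\mathrm{acc}\,\sigma_{SF}(T)$ uniformly across the three subcases is where I would be most careful.
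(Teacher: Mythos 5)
Your proposal is correct and follows essentially the same route as the paper: the paper gives no separate argument for Theorem 4.6 beyond the remark that it is read off from the proof of Proposition 4.4, and your derivation is exactly that reading-off — case 4 of Proposition 4.4 is case 2) of the theorem, while the three index regimes of the $\sigma_0(T)=\emptyset$ analysis yield the common equalities $\sigma(T)=\sigma_{CI}(T)\cup\sigma_{SF}(T)$, $\sigma(T)=\Pi^{\ast}(T)\cup\hbox{acc}\sigma(T)$ together with subcases i)--iii), using the CI trichotomy and the identity $\hbox{acc}\sigma(T)=\overline{\sigma_{CI}(T)}\cup\hbox{acc}\sigma_{SF}(T)$ exactly as the paper does. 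Your filled-in bookkeeping for the converse (reconstituting the union-decompositions of Proposition 4.4 from the $\sigma_{CI}$-equations) is sound.
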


\section{Entanglement stability of property ($UW_E$) for $2\times2$ upper triangular operator matrices}

The aim of the section is to investigate the entanglement stability of property ($UW${\scriptsize \it{E}}) for $2\times2$ upper triangular operator matrices. 
The study arises naturally from the following fact: If $T\in B(\mathcal{H})$ and $M\subset \mathcal{H}$ is a closed subspace, then $T$ can be written as a $2\times 2$ matrix with operator entries,
$$ T=\left(
    \begin{array}{cc}
      A & C \\
      D & B \\
    \end{array}
  \right): M\oplus M^{\perp}\rightarrow  M\oplus M^{\perp}.
$$
If $M$ is invariant under $T$, $T$ has an upper triangular operator matrix representation. As a result, the study for $T$ can be reduced to the study of $A,B$ and $C$.
Now put $M=H$ and $M^{\perp}=K$, where $H$ and $K$ are infinite dimensional complex separable Hilbert spaces.
From now on, we always suppose that $A\in B(H)$ and $B\in B(K)$, and by $M_C$ we denote  an operator acting on $H\oplus K$ with the form
$$M_C=\left(
    \begin{array}{cc}
      A & C \\
      0 & B \\
    \end{array}
  \right),
$$
where $C\in B(K,H).$ Besides, write $M_C$ as $M_0$ if $C=0$.

For a given operator pair $(A, B)\in(B(H),B(K))$, if $M_0\in$($UW${\scriptsize \it{E}}), it's natural to consider whether $M_C\in$($UW${\scriptsize \it{E}}) for any $C\in B(K,H).$ 
In \cite{af}, we has conducted such a research,
the essence of which is to influence $C$ and ultimately affect the whole $M_C$ through the mutual entanglement between $A$ and $B$.
Now we continue to characterize the entanglement stability of property ($UW${\scriptsize \it{E}}) for $M_C$ with the help of the so-called $(A,B)$ entanglement stable spectra analogous to \cite{ff}. Some useful lemmas are presented first.

\begin{lem} Suppose $(A, B)\in(B(H),B(K))$. If
$A$ is upper semi-Fredholm and Kato (that is, $N(A)\subseteq\bigcap\limits_{n=1}^\infty R(A^n)$) with $d(A)=\infty$, then there is some $C\in B(K,H)$ such that $M_C$ is upper semi-Fredholm and Kato with $\hbox{ind}M_C<0$.
\end{lem}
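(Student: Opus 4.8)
The plan is to prove the lemma by an explicit construction of $C$, exploiting that the hypothesis $d(A)=\infty$ leaves an infinite-dimensional complement of $R(A)$ into which $K$ can be embedded. Since $A$ is upper semi-Fredholm, $R(A)$ is closed, so $H=R(A)\oplus R(A)^{\perp}$ with $\dim R(A)^{\perp}=d(A)=\infty$. I would split $R(A)^{\perp}=N\oplus N'$ into two infinite-dimensional closed subspaces and choose $C\in B(K,H)$ to be any bounded below operator with $R(C)=N$ (for instance an isometry of $K$ onto $N$, which exists since $K$ and $N$ are separable and infinite-dimensional). The point of forcing $R(C)\subseteq R(A)^{\perp}$ and $C$ bounded below is to decouple the two coordinates of $M_C$ from the unknown behaviour of $B$.

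First I would compute the kernel. If $(x,y)\in N(M_C)$ then $Ax+Cy=0$ and $By=0$; since $Ax\in R(A)$ and $Cy\in R(C)\subseteq R(A)^{\perp}$ are orthogonal, both vanish, and injectivity of $C$ forces $y=0$. Hence $N(M_C)=N(A)\oplus\{0\}$ and $n(M_C)=n(A)<\infty$, independently of $B$. For the closedness of $R(M_C)=\{(Ax+Cy,By):x\in H,\ y\in K\}$, I would take a convergent sequence $(Ax_k+Cy_k,By_k)\to(u,v)$; projecting the first coordinate orthogonally onto $R(A)$ and $R(A)^{\perp}$ shows $Ax_k\to u_1\in R(A)$ and $Cy_k\to u_2\in N$, whereupon $C$ bounded below gives $y_k\to C^{-1}u_2$ and continuity of $B$ gives $By_k\to BC^{-1}u_2=:v$; since $u=u_1+u_2$ with $u_1\in R(A)$, we get $(u,v)\in R(M_C)$. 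Thus $M_C$ is upper semi-Fredholm.

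Next I would settle the index and the Kato property. Every element of $R(M_C)$ has first coordinate in $R(A)\oplus N$, so $R(M_C)^{\perp}\supseteq N'\oplus\{0\}$ and $d(M_C)=\infty$; consequently $\hbox{ind}\,M_C=n(A)-\infty=-\infty<0$, as required. For the Kato property, recall $N(M_C)=N(A)\oplus\{0\}$ and write $M_C^{\,n}=\bigl(\begin{smallmatrix}A^{n}&C_n\\0&B^{n}\end{smallmatrix}\bigr)$. Since $A$ is Kato, $N(A)\subseteq\bigcap_{n=1}^{\infty}R(A^{n})$, so for $w\in N(A)$ there is $x$ with $A^{n}x=w$; then $M_C^{\,n}(x,0)=(A^{n}x,0)=(w,0)$, giving $N(M_C)\subseteq\bigcap_{n=1}^{\infty}R(M_C^{\,n})$. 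Together with the closedness of $R(M_C)$ this shows $M_C$ is Kato.

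The main obstacle is the closedness of $R(M_C)$: a priori the second coordinate ranges over $R(B)$, which need not be closed, so closedness cannot be read off coordinatewise. The construction circumvents this by making $C$ bounded below with range orthogonal to $R(A)$, which lets convergence of the \emph{first} coordinate control $y_k$ and hence drag $By_k$ to its limit; this is precisely why the arbitrariness of $B$ does not obstruct the conclusion, and why $d(A)=\infty$ (which supplies room for both the embedding $N$ and the leftover $N'$ forcing $\hbox{ind}\,M_C<0$) is the essential hypothesis.
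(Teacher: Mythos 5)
Your proof is correct, and it is noticeably more uniform than the paper's. Both arguments share the same germ --- use $d(A)=\infty$ to embed $K$ isometrically into $R(A)^{\perp}$, so that the orthogonality of $R(A)$ and $R(C)$ decouples the first coordinate and yields $N(M_C)=N(A)\oplus\{0\}$ and the Kato property via $M_C^{\,n}(x,0)=(A^nx,0)$. The difference is in how the range and the index are handled. The paper splits into four cases according to the behaviour of $B$ ($B$ upper semi-Fredholm; $R(B)$ not closed; $R(B)$ closed with $n(B)=\infty$ and $n(A)<d(B)$, resp.\ $n(A)\geq d(B)$), and in the hard cases its $C$ is injective only on $N(B)$ (or $N(B)\oplus M$), so recovering Cauchyness of $\{y_n\}$ in the closed-range argument has to go through $B\mid_{N(B)^{\perp}}$ being bounded below, and the inequality $d(M_C)>n(M_C)$ has to be engineered by hand (in the last case by an explicit $(n-m+1)$-dimensional piece of $N(M_C^{\ast})$). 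Your two choices --- making $C$ bounded below on \emph{all} of $K$, and aiming it at a proper infinite-dimensional subspace $N$ of $R(A)^{\perp}$ with infinite-dimensional orthocomplement $N'$ --- eliminate both difficulties at once: convergence of $Cy_k$ alone forces $y_k$ to converge regardless of $B$, and $N'\oplus\{0\}\subseteq R(M_C)^{\perp}$ gives $d(M_C)=\infty$, hence $\hbox{ind}M_C=-\infty$, with no case distinction. Your single construction also delivers $n(M_C)=n(A)$, so it covers the two supplementary statements the paper extracts from its proof. The only cosmetic slip is the clause ``$By_k\to BC^{-1}u_2=:v$'': $v$ is already defined as the limit of the second coordinates, so you should say $By_k\to BC^{-1}u_2$ and conclude $v=BC^{-1}u_2$ by uniqueness of limits; this does not affect the argument.
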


\begin{proof} If $B$ is upper semi-Fredholm, then $M_C$ is upper semi-Fredholm with $\hbox{ind}M_C=-\infty$ for any $C\in B(K,H)$. Since $\hbox{dim}R(A)^{\perp}=\infty$,  there is an invertible isometry $T: K\rightarrow R(A)^{\perp}$.

Define an operator $C_0$ by
 $$C_0=\left(
         \begin{array}{c}
           T \\
           0 \\
         \end{array}
       \right):K\rightarrow\left(
                                  \begin{array}{c}
                                    R(A)^{\perp} \\
                                    R(A) \\
                                  \end{array}
                                \right).$$
It is easy to see that $N(M_{C_0})=N(A)\oplus\{0\}\subseteq\bigcap\limits_{n=1}^\infty R(A^n)\oplus\{0\}\subseteq\bigcap\limits_{n=1}^\infty R({M_{C_0}}^n)$.
Hence $M_{C_0}$ is upper semi-Fredholm and Kato with $\hbox{ind}M_{C_0}< 0.$

If $B$ is not upper semi-Fredholm, $R(B)$ is either closed with $n(B)=\infty$ or not closed.

 Assume firstly that $R(B)$ is not closed.
Set $C=C_0$. Then $M_{C_0}$ is upper semi-Fredholm and $N(M_{C_0})=N(A)\oplus\{0\}$ and thus $M_{C_0}$ is Kato. Besides, in view of $d(M_{C_0})\geq d(B)$, one has $d(M_{C_0})=\infty$, and hence $\hbox{ind}M_{C_0}< 0.$

 Assume secondly that $R(B)$ is closed with $n(B)=\infty$.
Now we have either $n(A)<d(B)$ or $n(A)\geq d(B)$.
\begin{enumerate} 
\item Suppose $n(A)<d(B)$.

Since $N(B)$ and $R(A)^{\perp}$ are both infinite dimensional, there is an invertible isomorphism $T: N(B)\rightarrow R(A)^{\perp}$.

Define an operator $C:K\rightarrow H$ by
 $$C=\left(
         \begin{array}{cc}
           T & 0 \\
           0 & 0 \\
         \end{array}
       \right):\left(
                \begin{array}{c}
                 N(B)\\
                  N(B)^{\perp} \\
                \end{array}
              \right)\rightarrow\left(
                                  \begin{array}{c}
                                    R(A)^{\perp} \\
                                    R(A) \\
                                  \end{array}
                                \right).
$$
One can see that $N(M_C)=N(A)\oplus\{0\}$. This shows $M_C$ is Kato and $n(M_C)<\infty$. Now, it suffices to show that $M_C$ is upper semi-Fredholm with $\hbox{ind}M_C\leq 0$. In other words, we will show that $R(M_C)$ is closed and $\hbox{ind}M_C<0$.

\begin{enumerate} 
\item $R(M_C)$ is closed.

Suppose $\bigl(
                          \begin{smallmatrix}
                                        u_0 \\
                                        v_0 \\
                                      \end{smallmatrix}\bigr)
                                    \in\overline{R(M_C)}$, 
there is a sequence $\Big\{\bigl(
                          \begin{smallmatrix}
                                        u_n \\
                                        v_n \\
                                      \end{smallmatrix}\bigr)\Big\}$ such that
$M_C\bigl(
                          \begin{smallmatrix}
                                        u_n \\
                                        v_n \\
                                      \end{smallmatrix}\bigr)$
                                     $\rightarrow \bigl(
                          \begin{smallmatrix}
                                        u_0 \\
                                        v_0 \\
                                      \end{smallmatrix}\bigr)
$($n\rightarrow\infty$). Namely, $Au_n +Cv_n\rightarrow u_0$ and $Bv_n\rightarrow v_0$ ($n\rightarrow\infty$). Note that $\{Au_n\}\subseteq R(A)$ and $\{Cv_n\}\subseteq R(A)^\perp$. Then $\{Au_n\}$ and $\{Cv_n\}$ are Cauchy sequences. Put $v_n=\alpha_n +\beta_n$, where $\alpha_n\in N(B)$ and $\beta_n\in N(B)^{\perp}$. Then $Cv_n=T\alpha_n$. Besides, from the invertibility of $T$,  we deduce that $\{\alpha_n\}$ is a Cauchy sequence. Thus $\{B\alpha_n\}$ is a Cauchy sequence. This implies $\{B\beta_n\}$ is a Cauchy sequence since $B\beta_n=Bv_n-B\alpha_n$. Furthermore,  $B\mid_{ N(B)^{\perp}}$ is bounded below, it shows $\{\beta_n\}$ is a Cauchy sequence. Hence $\{v_n\}$ is a Cauchy sequence, and there is some $y_0$ such that  $v_n\rightarrow y_0$ ($n\rightarrow\infty$). Then $Cv_n\rightarrow Cy_0$ and $Bv_n\rightarrow By_0$.
Since $R(A)$ is closed, there is some $x_0$ such that $Au_n\rightarrow Ax_0$ ($n\rightarrow\infty$). Then $M_C\bigl(
                          \begin{smallmatrix}
                                        x_0 \\
                                        y_0 \\
                                      \end{smallmatrix}\bigr)=\bigl(
                          \begin{smallmatrix}
                                        u_0 \\
                                        v_0 \\
                                      \end{smallmatrix}\bigr),$
and $R(M_C)$ is closed.

\item  $\hbox{ind}M_C<0$.

Suppose on the contrary that $\hbox{ind}M_C\geq0$, that is, $n(M_C)\geq d(M_C)$. Combining $n(M_C)=n(A)$ and $d(M_C)\geq d(B)$, we deduce $n(A)\geq d(B)$, which is a contradiction.
\end{enumerate} 

\item Suppose $n(A)\geq d(B)$.

Since $A$ is upper semi-Fredholm, $B$ is lower semi-Fredholm. Write $n(A)=n$ and $d(B)=m$. $B$ is not a compact operator seeing that $B$ is lower semi-Fredholm. Then $R(B^\ast)=N(B)^{\perp}$ is infinite dimensional.
Suppose $N(B)^{\perp}=N\oplus M$, where $\hbox{dim}M=n-m+1$ (this guarantees $n-m<\hbox{dim}M<\infty$),
and $R(A)^{\perp}=E\oplus F$, where $\hbox{dim}F=\hbox{dim}M$. Then $\hbox{dim}E=\infty$. From the preceding procedure, there are two invertible isometries  $T_1:N(B)\rightarrow E$ and $T_2:M\rightarrow F$.

Define an operator $C:K\rightarrow H$ by
 $$C=\left(
         \begin{array}{ccc}
           T_1 & 0 & 0\\
           0 & T_2 & 0 \\
           0 & 0 & 0 \\
         \end{array}
       \right):\left(
                \begin{array}{c}
                  N(B) \\
                  M \\
                  N \\
                \end{array}
              \right)\rightarrow\left(
                                  \begin{array}{c}
                                    E \\
                                    F \\
                                    R(A)\\
                                  \end{array}
                                \right).$$
It is straightforward to check that $N(M_C)=N(A)\oplus\{0\}$. Hence $M_C$ is Kato and $n(M_C)<\infty$.
Now we show that $M_C$ is upper semi-Fredholm with $\hbox{ind}M_C\leq 0$.
\begin{enumerate} 
\item $R(M_C)$ is closed.

Assume $\bigl(
                          \begin{smallmatrix}
                                        u_0 \\
                                        v_0 \\
                                      \end{smallmatrix}\bigr)\in\overline{R(M_C)}$. Then there is a sequence $\Big\{\bigl(
                          \begin{smallmatrix}
                                        x_n \\
                                        y_n \\
                                      \end{smallmatrix}\bigr)\Big\}$ such that
     $M_C\bigl(
                          \begin{smallmatrix}
                                        x_n \\
                                        y_n \\
                                      \end{smallmatrix}\bigr)\rightarrow \bigl(
                          \begin{smallmatrix}
                                        u_n \\
                                        v_n \\
                                      \end{smallmatrix}\bigr)
$($n\rightarrow\infty$), i.e., $Ax_n +Cy_n\rightarrow u_0$ and $By_n\rightarrow v_0$ ($n\rightarrow\infty$). By the definition of the operator $C$, $\{Ax_n\}$ and $\{Cy_n\}$ are Cauchy sequences.
There is some $x_0$ such that $Ax_n\rightarrow Ax_0$ ($n\rightarrow\infty$) since $R(A)$ is closed. Let $y_n=\alpha_n +\beta_n+\gamma_n$, where $\alpha_n\in N(B)$, $\beta_n\in M$ and $\gamma_n\in N$. Then $Cy_n=T_1\alpha_n+T_2\beta_n=\bigl(
                          \begin{smallmatrix}
                                        T_1 & 0 \\
                                        0 & T_2 \\
                                      \end{smallmatrix}\bigr)\bigl(
                          \begin{smallmatrix}
                                        \alpha_n \\
                                        \beta_n \\
                                      \end{smallmatrix}\bigr).$
For $\bigl(
                          \begin{smallmatrix}
                                        T_1 & 0 \\
                                        0 & T_2 \\
                                      \end{smallmatrix}\bigr)$ is invertible, $\{\alpha_n +\beta_n\}$ is a Cauchy sequence. Moreover, from $\alpha_n\in N(B)$ and $\beta_n\in M\subseteq N(B)^\perp$, it follows that $\{\alpha_n\}$ and $\{\beta_n\}$ are both Cauchy sequences. Note that $By_n=B(\beta_n+\gamma_n)$ and  $B\mid_{ N(B)^{\perp}}$ is bounded below, we deduce that  $\{\beta_n+\gamma_n\}$ is a Cauchy sequence. Then $\{\gamma_n\}$ and furthermore, $\{y_n\}$ is a Cauchy sequence. Suppose $y_0$ is the limit point of $\{y_n\}$, thereby $Cy_n\rightarrow Cy_0$ and $By_n\rightarrow By_0$ ($n\rightarrow\infty$). Thus $M_C\bigl(
                          \begin{smallmatrix}
                                        x_0 \\
                                        y_0 \\
                                      \end{smallmatrix}\bigr)=\bigl(
                          \begin{smallmatrix}
                                        u_0 \\
                                        v_0 \\
                                      \end{smallmatrix}\bigr)$, and $\bigl(
                          \begin{smallmatrix}
                                        u_0 \\
                                        v_0 \\
                                      \end{smallmatrix}\bigr)\in R(M_C)$.

\item   $\hbox{ind}M_C<0$.

It is easy to see that $\{0\}\oplus N(B^\ast)\subseteq N({M_C}^\ast)$ and $n(B^\ast)=d(B)=m$.
Suppose that $\{e_1, e_2, \cdots,e_m\}$ is an orthonormal basis in $N(B^\ast)$. Note that $F\subseteq R(A)^\perp=N(A^\ast)$ and $\hbox{dim}F=n-m+1$. Then let $\{x_1, x_2, \cdots,x_{n-m+1}\}$ be an orthonormal basis in $F$. Furthermore, by invertibility of ${T_2}^\ast$,
$\{{T_2}^\ast x_1,{T_2}^\ast x_2, \cdots,{T_2}^\ast x_{n-m+1}\}$ is a linearly independent subset of $M$. Since $M\subseteq N(B)^\perp=R(B^\ast)$, there are $y_1, y_2, \cdots,y_{n-m+1}$ in $Y$ such that ${T_2}^\ast x_i=B^\ast y_i$, where $i=1,2,\cdots,n-m+1.$ Hence $\bigl(
                          \begin{smallmatrix}
                                        x_i \\
                                        -y_i \\
                                      \end{smallmatrix}\bigr)\in N({M_C}^\ast)$.
It is easy to see that $\Big\{ \bigl(
                          \begin{smallmatrix}
                                        x_1 \\
                                        -y_1 \\
                                      \end{smallmatrix}\bigr), \bigl(
                          \begin{smallmatrix}
                                        x_2 \\
                                        -y_2 \\
                                      \end{smallmatrix}\bigr), \cdots,\bigl(
                          \begin{smallmatrix}
                                        x_{n-m+1} \\
                                        -y_{n-m+1} \\
                                      \end{smallmatrix}\bigr),\bigl(
                          \begin{smallmatrix}
                                         0 \\
                                         e_1 \\
                                      \end{smallmatrix}\bigr), \cdots,\bigl(
                          \begin{smallmatrix}
                                          0 \\
                                          e_m \\
                                      \end{smallmatrix}\bigr)\Big\}$ is linearly independent in $N({M_C}^\ast)$. This shows $n({M_C}^\ast)\geq n-m+1+m=n+1$ and hence $d(M_C)\geq n+1.$ Thus $\hbox{ind}M_C<0$ since $n(M_C)=n$. 
\end{enumerate} 
\end{enumerate}  
The proof is completed.
\end{proof}

From the proof of Lemma 5.1, some results emerged as follows.

 Let $A\in B(H)$ be upper semi-Fredholm and Kato with $d(A)=\infty$, and $B\in B(K)$. The following statements hold:
 \begin{enumerate}
   \item Suppose $n(B)=\infty$. If $n(A)>d(B)$, there is some $C\in B(K,H)$ such that $M_C$ is upper semi-Fredholm and Kato with $n(M_C)>0$ and $\hbox{ind}M_C<0$.
   \item Suppose $n(A)>0$. There is some $C\in B(K,H)$ such that $M_C$ is upper semi-Fredholm and Kato with $n(M_C)>0$ and $\hbox{ind}M_C<0$.
 \end{enumerate}

The following lemma is useful to accomplish the aim in this section. 

\begin{lem} \cite[Lemma 2.1]{am} Suppose $(A, B)\in(B(H),B(K))$. If $A$ is upper semi-Fredholm with $\hbox{asc}(A)<\infty$, $d(A)=\infty$ and $n(A)+n(B)>0$ (abbrev. $N_0(A; B)>0$), then there exists $C\in B(K,H)$ such that $M_C$ is an upper semi-Fredholm operator with $n(M_C)>0$, $\hbox{ind}M_C<0$ and $\hbox{asc}(M_C)<\infty$.
\end{lem}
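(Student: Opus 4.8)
The plan is to build the coupling operator $C$ by exploiting the infinite-dimensional ``room'' in $R(A)^{\perp}$, splitting the argument according to whether $A$ itself has a nonzero kernel. Throughout, write $a=\hbox{asc}(A)<\infty$ and let $P^{\perp}$ be the orthogonal projection of $H$ onto $R(A)^{\perp}$. Since $A$ is upper semi-Fredholm with $d(A)=\infty$, the range $R(A)$ is closed, $R(A)^{\perp}$ is infinite dimensional, and $N(A^{a})$ is finite dimensional (indeed $n(A^{k})\le k\,n(A)<\infty$). Because $N_{0}(A;B)>0$, exactly one of the cases $n(A)>0$ or ($n(A)=0$ and $n(B)>0$) occurs, and I would handle them separately.

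In the case $n(A)>0$ I would set $F:=P^{\perp}N(A^{a})$ (finite dimensional), decompose $R(A)^{\perp}=F\oplus W\oplus W'$ with $W,W'$ infinite dimensional and mutually orthogonal, and fix an isometry $C\colon K\to W$. Since $R(C)\subseteq R(A)^{\perp}$ and $C$ is bounded below, a Cauchy-sequence argument exactly as in the proof of Lemma 5.1 shows $R(M_C)$ is closed, while $Ax+Cy=0$ splits into $Ax=0$, $Cy=0$, giving $N(M_C)=N(A)\oplus\{0\}$ and $n(M_C)=n(A)>0$. The crux is the ascent. Writing $M_C^{n}\bigl(\begin{smallmatrix}x\\y\end{smallmatrix}\bigr)=\bigl(\begin{smallmatrix}A^{n}x+S_{n}(y)\\B^{n}y\end{smallmatrix}\bigr)$ with $S_{n}(y)=\sum_{j=0}^{n-1}A^{j}CB^{n-1-j}y$, I would prove by downward induction on $m$ that every $\bigl(\begin{smallmatrix}x\\y\end{smallmatrix}\bigr)\in N(M_C^{n})$ satisfies $B^{m}y=0$ and $A^{m}x+S_{m}(y)\in N(A^{\,n-m})$ for all $0\le m\le n$. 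In the inductive step one applies $P^{\perp}$ to the level-$m$ relation: every summand carrying a factor $A^{j}$ with $j\ge1$ dies, leaving $CB^{m-1}y\in W\cap P^{\perp}N(A^{\,n-m})\subseteq W\cap F=\{0\}$, so $B^{m-1}y=0$ by injectivity of $C$, and then $S_{m}(y)=AS_{m-1}(y)$ feeds the next level. At $m=0$ this forces $y=0$, hence $N(M_C^{n})=N(A^{n})\oplus\{0\}$ and $\hbox{asc}(M_C)=\hbox{asc}(A)<\infty$. Finally $\bigl(\begin{smallmatrix}u\\0\end{smallmatrix}\bigr)\perp R(M_C)$ for every $u\in F\oplus W'$, so $d(M_C)=\infty$ and $\hbox{ind}M_C=-\infty<0$.

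In the case $n(A)=0$ (so $A$ is bounded below) and $n(B)>0$, I would instead \emph{route one kernel direction through} $R(A)$. Fix unit vectors $y_{0}\in N(B)$ and $\xi\in R(A)^{\perp}$, choose an infinite-dimensional, infinite-codimensional $W\subseteq R(A)^{\perp}$ with $\xi\perp W$, and define $C$ by $Cy_{0}=A\xi$ and $C|_{\langle y_{0}\rangle^{\perp}}$ an isometry into $W$. A direct check gives $N(M_C)=\langle\bigl(\begin{smallmatrix}-\xi\\y_{0}\end{smallmatrix}\bigr)\rangle$, so $n(M_C)=1$, and closedness of $R(M_C)$ together with $d(M_C)=\infty$ follow as before. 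For the ascent I would use $\hbox{asc}(M_C)\le1\iff N(M_C)\cap R(M_C)=\{0\}$: if $\bigl(\begin{smallmatrix}-\xi\\y_{0}\end{smallmatrix}\bigr)\in R(M_C)$, then $Ax+Cy=-\xi$ and $By=y_{0}$ for some $x,y$, and projecting the first equation by $P^{\perp}$ collapses it to $C|_{\langle y_{0}\rangle^{\perp}}y=-\xi$, impossible since the left side lies in $W$ while $-\xi\perp W$ is nonzero. Thus $\hbox{asc}(M_C)\le1<\infty$. The point is that the arbitrarily long kernel chains $B$ may carry are harmless precisely because the defect vector $\xi$ sits \emph{outside} $R(C|_{\langle y_{0}\rangle^{\perp}})$, so those chains cannot be lifted into $R(M_C)$.

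The main obstacle in both cases is the ascent estimate; closedness of $R(M_C)$ and the values of $n(M_C)$ and $d(M_C)$ follow the template of Lemma 5.1. It is in controlling $\hbox{asc}(M_C)$ that the two standing hypotheses must be spent: when $n(A)>0$, the finiteness of $\hbox{asc}(A)$ together with the orthogonality $W\perp P^{\perp}N(A^{\hbox{asc}(A)})$ lets the induction peel off the powers of $B$ one at a time; and when $n(A)=0$, it is the placement of $\xi$ off $R(C)$ that caps the ascent. I expect the downward induction of the first case to be the most delicate step to write out carefully.
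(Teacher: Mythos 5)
This lemma is not proved in the paper at all: it is imported verbatim from \cite[Lemma 2.1]{am} (a preprint), so there is no in-paper argument to measure you against. The only internal reference point is the proof of Lemma 5.1, whose template (an isometry from a piece of $K$ into $R(A)^{\perp}$, the Cauchy-sequence argument for closedness of $R(M_C)$, and the computation $N(M_C)=N(A)\oplus\{0\}$) you reuse. Your two-case construction is, as far as I can check, correct and complete. In the case $n(A)>0$, the key refinement over Lemma 5.1 is aiming $C$ into a subspace $W$ of $R(A)^{\perp}$ orthogonal to $F=P^{\perp}N(A^{\mathrm{asc}(A)})$ (legitimate, since $n(A^{k})\le k\,n(A)<\infty$ makes $F$ finite dimensional); your downward induction then works because $S_{m}(y)=CB^{m-1}y+AS_{m-1}(y)$, so applying $P^{\perp}$ to the level-$m$ identity isolates $CB^{m-1}y$ inside $W\cap F=\{0\}$, and the conclusion $N(M_C^{n})=N(A^{n})\oplus\{0\}$ gives $\mathrm{asc}(M_C)=\mathrm{asc}(A)$ exactly. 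In the case $n(A)=0$, $n(B)>0$, the Lemma 5.1 template would yield $n(M_C)=0$, so your device of sending $y_{0}\in N(B)$ to $A\xi$ with $\xi\in R(A)^{\perp}$, $\xi\perp W$, is genuinely needed; it manufactures the one-dimensional kernel $\langle(-\xi,y_{0})\rangle$, and the criterion $\mathrm{asc}(M_C)\le 1\Leftrightarrow N(M_C)\cap R(M_C)=\{0\}$ is settled by projecting $Ax+Cy=-\xi$ onto $R(A)^{\perp}$, which traps $-\xi$ in $W$, a contradiction. The remaining claims ($R(M_C)$ closed, $n(M_C)<\infty$, $d(M_C)=\infty$ hence $\mathrm{ind}\,M_C=-\infty<0$) all check out. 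The one place where you should be explicit when writing this up is the standing fact $N(A^{k})\subseteq N(A^{\mathrm{asc}(A)})$ for every $k$, which is what lets you conclude $P^{\perp}(A^{m}x+S_{m}(y))\in F$ uniformly in $n$ and $m$.
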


In the following, we abbreviate this notation ``$n(A-\lambda I)+n(B-\lambda I)$" (resp. ``$d(A-\lambda I)+d(B-\lambda I)$") as ``$N_{\lambda}(A; B)$" (resp. ``$D_{\lambda}(A; B)$") for the sake of simplicity.
Now it's time to introduce the concepts of the $(A,B)$ entanglement stable spectra according to Lemma 5.1 and Lemma 5.2. 
\begin{defn}
For a given pair $(A,B)\in (B(H),B(K))$, define\\ 
 $\Omega_1(A; B)=\{ \lambda\in\rho_{ab}(A): d(A-\lambda I)=\infty$ and $N_{\lambda}(A; B)>0\}$;\\ $\Omega_2(A; B)=\{\lambda\in\rho_{SF_+}(A)\cap\rho_{K}(A): d(A-\lambda I)=\infty$ and $n(A-\lambda I)>0\}$; \\$\Omega_3(A; B)=\{\lambda\in\sigma_{CI}(A): n(A-\lambda I)<\infty, n(B-\lambda I)\leq d(A-\lambda I)$ and 

\qquad\quad\ \,$N_{\lambda}(A; B)= D_{\lambda}(A; B)\}$;\\ $\Omega(A; B)=\{\lambda\in\rho_{SF_+}(A): d(A-\lambda I)=\infty$ and $N_{\lambda}(A; B)>0\}$, \\which are collectively referred to as the $(A,B)$ entanglement stable spectra. 
\end{defn}

With the help of the entanglement stable spectra,  let's discuss the circumstances in which $M_C\in$($UW${\scriptsize \it{E}}) for any $C\in B(K, H)$.

\begin{thm} For an operator pair $(A, B)\in(B(H),B(K))$, $M_C\in$($UW${\scriptsize \it{E}}) for any $C\in B(K, H)$ if and only if the following conditions hold:

$1)$\  $M_0\in$($UW${\scriptsize \it{E}});

$2)$\  $\Omega_1(A; B)\cup\Omega_2(A; B)=\emptyset$.
  \end{thm}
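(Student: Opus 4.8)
The plan is to prove the equivalence by establishing the contrapositive for the necessity direction and a direct construction-free argument for the sufficiency, using the two spectral decompositions of $\sigma_b$ from Section 3 together with the entanglement lemmas 5.1 and 5.2. For the \emph{necessity}, suppose $M_C\in(UW_E)$ for all $C$. Taking $C=0$ immediately yields condition $1)$. For condition $2)$, I would argue by contradiction: if some $\lambda_0\in\Omega_1(A;B)$, then $A-\lambda_0 I$ is upper semi-Browder (so $\mathrm{asc}(A-\lambda_0 I)<\infty$) with $d(A-\lambda_0 I)=\infty$ and $N_{\lambda_0}(A;B)>0$, which is exactly the hypothesis of Lemma 5.2. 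That lemma produces a $C$ for which $M_C-\lambda_0 I$ is upper semi-Fredholm with $n(M_C-\lambda_0 I)>0$, $\mathrm{ind}(M_C-\lambda_0 I)<0$ and finite ascent; this forces $\lambda_0\in\sigma_a(M_C)\setminus\sigma_{ea}(M_C)$ while $\lambda_0$ fails to be an isolated eigenvalue (the finite ascent plus negative index prevents Browderness), contradicting $M_C\in(UW_E)$. The case $\lambda_0\in\Omega_2(A;B)$ is handled identically using Lemma 5.1 (and the two corollaries extracted from its proof), which supplies a $C$ making $M_C-\lambda_0 I$ upper semi-Fredholm and Kato with $n(M_C-\lambda_0 I)>0$ and negative index.

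For the \emph{sufficiency}, I would assume $M_0\in(UW_E)$ and $\Omega_1(A;B)\cup\Omega_2(A;B)=\emptyset$, and show $M_C\in(UW_E)$ for every fixed $C$. The key structural input is the standard perturbation relation between $M_0$ and $M_C$: one always has $\sigma_{ea}(M_C)\subseteq\sigma_{ea}(M_0)$ type containments together with the filling-in relations $\sigma_*(M_C)\cup W=\sigma_*(A)\cup\sigma_*(B)$ for the various Fredholm-type spectra, where $W$ is a union of holes. The plan is to verify $\sigma_a(M_C)\setminus\sigma_{ea}(M_C)=E(M_C)$ directly. I would take $\lambda\in\sigma_a(M_C)\setminus\sigma_{ea}(M_C)$, so $M_C-\lambda I$ is upper semi-Fredholm with index $\le 0$ and $n(M_C-\lambda I)$ possibly positive; using the diagonal-to-triangular comparison I translate this into information about $A-\lambda I$ and $B-\lambda I$. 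The hypothesis $\Omega_1\cup\Omega_2=\emptyset$ is precisely what rules out the ``bad'' configurations where $A-\lambda I$ has infinite deficiency and residual kernel contribution, so that $\lambda$ cannot create a newly punctured point of $\sigma_{ea}$; combined with $M_0\in(UW_E)$ this yields that $\lambda$ is a Riesz point of $M_C$, hence $\lambda\in E(M_C)$. The reverse inclusion $E(M_C)\subseteq\sigma_a(M_C)\setminus\sigma_{ea}(M_C)$ follows because an isolated eigenvalue of $M_C$ lying off $\Omega_1\cup\Omega_2$ must correspond to a Browder point.

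The main obstacle I anticipate is the sufficiency direction's bookkeeping of how the entanglement between $A$ and $B$ transports semi-Fredholm data from $M_0$ to $M_C$. Upper-triangular perturbations can shrink the deficiency (a kernel of $B$ can be ``absorbed'' through $C$ into the range complement of $A$), so the equality $\sigma_a(M_C)\setminus\sigma_{ea}(M_C)=E(M_C)$ is not automatic from $M_0\in(UW_E)$ alone; it is exactly the two entanglement stable spectra $\Omega_1$ and $\Omega_2$ that quarantine the points where this absorption could manufacture a violation. Making this precise will require a careful case analysis on whether $A-\lambda I$ is upper semi-Fredholm, whether $d(A-\lambda I)$ is finite or infinite, and the relative sizes $n(B-\lambda I)$ versus $d(A-\lambda I)$ — the same dichotomies appearing in the definitions of $\Omega_1,\Omega_2,\Omega_3$. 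I expect that $\Omega_3$ and the auxiliary $\Omega$ from Definition 5.3, although not named in the theorem statement, will surface implicitly in verifying that the remaining configurations are harmless, so I would keep the filling-in identities for $\sigma_{CI}$, $\sigma_{SF_+}$ and $\sigma_K$ at hand throughout.
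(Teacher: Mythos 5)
Your necessity argument matches the paper's: condition $1)$ follows by taking $C=0$, and for condition $2)$ Lemmas 5.1 and 5.2 supply a $C$ with $\lambda_0\in\sigma_a(M_C)\setminus\sigma_{ea}(M_C)$ and $\hbox{ind}(M_C-\lambda_0 I)<0$. One small repair: the contradiction is not that ``finite ascent plus negative index prevents Browderness'' --- $E(M_C)$ is defined by isolation, not Browderness --- but that an isolated point of $\sigma(M_C)$ at which $M_C-\lambda_0 I$ is upper semi-Fredholm must have index $0$ by the punctured neighbourhood theorem (the paper instead argues via Kato $+$ Browder $\Rightarrow$ invertible). With that fix, this half is sound.

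The sufficiency half, however, is a plan rather than a proof, and the one idea that makes the theorem work is absent from it. You assert that $\Omega_1(A;B)\cup\Omega_2(A;B)=\emptyset$ ``is precisely what rules out the bad configurations'' and defer the ``bookkeeping'', but you never say how the two sets interact, and that interaction is the entire content. Concretely: if $\lambda_0\in\sigma_a(M_C)\setminus\sigma_{ea}(M_C)$ then $A-\lambda_0 I\in\Phi_+(H)$; if moreover $d(A-\lambda_0 I)=\infty$, then for all $\lambda$ with $0<\lvert\lambda-\lambda_0\rvert$ small, $A-\lambda I$ is upper semi-Fredholm and Kato with $d(A-\lambda I)=\infty$, so $\Omega_2(A;B)=\emptyset$ forces $n(A-\lambda I)=0$ on that punctured disc; this stable injectivity yields $\hbox{asc}(A-\lambda_0 I)<\infty$, hence $\lambda_0\in\Omega_1(A;B)$ --- a contradiction. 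Only this reduction to $d(A-\lambda_0 I)<\infty$ (so that $A-\lambda_0 I$ is Fredholm and $B-\lambda_0 I\in\Phi_+(K)$) allows you to pass to $M_0$ and invoke $M_0\in$($UW${\scriptsize \it{E}}). A parallel local argument is needed for $E(M_C)\subseteq\sigma_a(M_C)\setminus\sigma_{ea}(M_C)$: invertibility of $M_C-\lambda I$ near $\lambda_0$ gives $A-\lambda I$ bounded below, $B-\lambda I$ surjective and $d(A-\lambda I)=n(B-\lambda I)$, and $\Omega_1(A;B)=\emptyset$ forces these quantities to be finite, so $M_0-\lambda I$ is Weyl, hence Browder, hence invertible; thus $\lambda_0\in E(M_0)$ and Browderness transfers back to $M_C-\lambda_0 I$. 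The tools you propose to lean on --- filling-in identities $\sigma_{\ast}(M_C)\cup W=\sigma_{\ast}(A)\cup\sigma_{\ast}(B)$ and the sets $\Omega_3(A;B)$, $\Omega(A;B)$ --- are not what is needed here: the mechanism is local (punctured neighbourhoods), not global (holes), and $\Omega_3$, $\Omega$ only enter in Corollary 5.5 and Theorem 5.6.
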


\begin{proof} With the help of Lemma 5.2, it suffices to show $\Omega_2(A; B)=\emptyset$ for the necessity. Assume on the contrary that $\lambda_0\in\Omega_2(A; B)$, then, from Lemma 5.1, there is some $C$ such that $\lambda_0\in\sigma_a(M_C)\backslash\sigma_{ea}(M_C)$ and $M_C-\lambda_0 I$ is Kato. Besides, in view of $M_C\in$($UW${\scriptsize \it{E}}), $M_C-\lambda_0 I$ is Browder.  Hence, by \cite[Lemma 3.4]{ah}, $M_C-\lambda_0 I$ is invertible, which contradicts with $\lambda_0\in\sigma_a(M_C)$.

Now we show the sufficiency. For any $C\in B(K,H)$,
suppose $\lambda_0\in\sigma_a(M_C)\setminus\sigma_{ea}(M_C)$, then $n(A-\lambda_0I)<\infty$ and $N_{\lambda_0}(A; B)>0$.
If $d(A-\lambda_0I)=\infty$, then $A-\lambda I$ is upper semi-Fredholm and Kato with $d(A-\lambda I)=\infty$ whenever $0<\lvert \lambda-\lambda_0\rvert$ is small enough.
Since $\Omega_2(A; B)=\emptyset$, $A-\lambda I$ is bounded below. This implies $\hbox{asc}(A-\lambda_0 I)<\infty$, which contradicts with $\Omega_1(A; B)=\emptyset$. Therefore $d(A-\lambda_0I)<\infty$, that is, $A-\lambda_0I$ is Fredholm. It follows that $B-\lambda_0 I$ is upper semi-Fredholm. Thus  $\lambda_0\in\sigma_a(M_0)\setminus\sigma_{ea}(M_0)$. Combining with $M_0\in$($UW${\scriptsize \it{E}}), we can see that $\lambda_0\in\sigma_0(M_C)\subseteq E(M_C)$.
Conversely, suppose $\lambda_0\in E(M_C)$. Then $N_{\lambda_0}(A; B)>0$. Besides, we have that $A-\lambda I$ is bounded below, $B-\lambda I$ is surjective and $d(A-\lambda I)=n(B-\lambda I)$ whenever $0<\lvert \lambda-\lambda_0\rvert$ is small enough. From $\Omega_1(A; B)=\emptyset$, we can deduce $d(A-\lambda I)=n(B-\lambda I)<\infty$. Then $M_0-\lambda I$ is Weyl, and furthermore Browder, which infers that  $A-\lambda I$ and $B-\lambda I$ are both invertible. That is to say $\lambda_0\in\hbox{iso}\sigma(M_0)$. Note that $N_{\lambda_0}(A; B)>0$, then $\lambda_0\in E(M_0)$. It follows from $M_0\in$($UW${\scriptsize \it{E}}) that $M_C-\lambda_0 I$ is Browder. This completes the proof.
\end{proof}

\begin{cor} For an operator pair $(A, B)\in(B(H),B(K))$, $M_C\in$($UW${\scriptsize \it{E}}) for any $C\in B(K, H)$ if and only if the following conditions hold:

$1)$\  $M_0\in$($UW${\scriptsize \it{E}});

$2)$\  If there is some $C\in B(K, H)$ such that $\sigma_a(M_C)\backslash\sigma_{ab}(M_C)\neq\emptyset$, then  $\sigma_a(M_C)\backslash\sigma_{ab}(M_C)=\sigma_0(M_C)$;

$3)$\  $\Omega_2(A; B)\cup\Omega_3(A; B)=\emptyset$.

\end{cor}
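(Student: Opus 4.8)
The plan is to read this corollary off Theorem 5.5. Once condition (1), namely $M_0\in(UW_{E})$, is assumed, Theorem 5.5 asserts that $M_C\in(UW_{E})$ for every $C\in B(K,H)$ is equivalent to $\Omega_1(A;B)\cup\Omega_2(A;B)=\emptyset$; so it suffices to prove that, under $M_0\in(UW_{E})$, this single spectral condition is in turn equivalent to conditions (2) and (3) taken together. Since $\Omega_2(A;B)=\emptyset$ occurs in both formulations it carries over verbatim, and the real content splits into two implications carried out under the standing hypotheses $M_0\in(UW_{E})$ and $\Omega_2(A;B)=\emptyset$: first, that $\Omega_1(A;B)=\emptyset$ forces condition (2) together with $\Omega_3(A;B)=\emptyset$; and second, that condition (2) already forces $\Omega_1(A;B)=\emptyset$.

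For the necessity I would proceed as follows. Assuming $M_C\in(UW_{E})$ for all $C$, Theorem 5.5 supplies $\Omega_1(A;B)\cup\Omega_2(A;B)=\emptyset$. To extract condition (2), fix any $C$; from $M_C\in(UW_{E})$ one has $\sigma_a(M_C)\setminus\sigma_{ea}(M_C)=E(M_C)$, and I would upgrade this to $\sigma_a(M_C)\setminus\sigma_{ab}(M_C)=\sigma_0(M_C)$. For the inclusion ``$\subseteq$'' note that any $\lambda$ in the left-hand set lies in $\sigma_a(M_C)\setminus\sigma_{ea}(M_C)=E(M_C)$, hence is isolated in $\sigma(M_C)$; being isolated and upper semi-Fredholm it is Browder, so $\lambda\in\sigma_0(M_C)$. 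The reverse inclusion is immediate, since a non-invertible Browder point lies in $\sigma_a(M_C)$ but outside $\sigma_{ab}(M_C)$. To obtain $\Omega_3(A;B)=\emptyset$ I would run a case analysis on the two alternatives defining $\sigma_{CI}(A)$. If $A-\lambda I$ is surjective but not invertible, then $d(A-\lambda I)=0$ forces $n(B-\lambda I)=0$, and the defect relation $N_{\lambda}(A;B)=D_{\lambda}(A;B)$ makes $M_0-\lambda I$ Weyl with $\lambda$ an isolated point of $\sigma(M_0)$, hence Browder; then $A-\lambda I$ is surjective and Browder, i.e. invertible, a contradiction. If $A-\lambda I$ is bounded below but not invertible, then either $d(A-\lambda I)=\infty$, which places $\lambda$ in $\Omega_1(A;B)=\emptyset$, or $d(A-\lambda I)<\infty$, in which case the defect relation again renders $M_0-\lambda I$ Weyl with a Browder isolated point while $A-\lambda I$ carries strictly negative index, again impossible.

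For the sufficiency, the key step is that condition (2) by itself yields $\Omega_1(A;B)=\emptyset$. Supposing $\lambda_0\in\Omega_1(A;B)$, the operator $A-\lambda_0 I$ is upper semi-Fredholm with $\hbox{asc}(A-\lambda_0 I)<\infty$, $d(A-\lambda_0 I)=\infty$ and $N_{\lambda_0}(A;B)>0$, so Lemma 5.2 furnishes a $C$ for which $M_C-\lambda_0 I$ is upper semi-Fredholm with $n(M_C-\lambda_0 I)>0$, $\hbox{asc}(M_C-\lambda_0 I)<\infty$ and $\hbox{ind}(M_C-\lambda_0 I)<0$; thus $\lambda_0\in\sigma_a(M_C)\setminus\sigma_{ab}(M_C)\neq\emptyset$, and condition (2) forces this set to equal $\sigma_0(M_C)$, so $\lambda_0$ would be a Riesz point of index zero, contradicting $\hbox{ind}(M_C-\lambda_0 I)<0$. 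Combining $\Omega_1(A;B)=\emptyset$ with $\Omega_2(A;B)=\emptyset$ from condition (3) and invoking Theorem 5.5 completes the sufficiency. I expect the main obstacle to be the necessity's case analysis delivering $\Omega_3(A;B)=\emptyset$: one must carefully track the finiteness of defects and kernels across the surjective and bounded-below branches of $\sigma_{CI}(A)$ and repeatedly convert $M_0\in(UW_{E})$ into the ``isolated-plus-Fredholm implies Browder'' conclusion, which is where the index bookkeeping is most delicate.
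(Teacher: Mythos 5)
Your proposal is correct and follows essentially the same route as the paper: reduce to the theorem characterizing the entanglement stability via $\Omega_1\cup\Omega_2=\emptyset$, dispose of $\Omega_3(A;B)$ in the necessity by the same two-case analysis on $\sigma_{CI}(A)$ (surjective versus bounded below, using $N_\lambda=D_\lambda$ to make $M_0-\lambda I$ Weyl and then $M_0\in(UW_{E})$ to make it Browder), and in the sufficiency derive $\Omega_1(A;B)=\emptyset$ from condition $2)$ via Lemma 5.2 exactly as the paper does. The only differences are cosmetic: you spell out the verification of condition $2)$, which the paper leaves implicit, and you cite the key theorem as 5.5 rather than 5.4.
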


\begin{proof} In virtue of Theorem 5.4, it suffices to show the set $\Omega_3(A; B)$ is empty for the necessity.
Suppose otherwise that $\lambda_0\in\Omega_3(A; B)$, then there are two cases as follows.

Assume firstly that $A-\lambda_0 I$ is bounded below but not surjective.
Note that $n(B-\lambda_0 I)=D_{\lambda_0}(A; B)$ and $n(B-\lambda_0 I)\leq d(A-\lambda_0 I)$, then $d(B-\lambda_0 I)=0$ and $n(B-\lambda_0 I)= d(A-\lambda_0 I)$. According to $\Omega_1(A; B)=\emptyset$, $n(B-\lambda_0 I)\leq d(A-\lambda_0 I)<\infty$. It shows that $M_0-\lambda_0 I$ is Weyl. One can see $A-\lambda_0 I$ and $B-\lambda_0 I$ are Browder since $M_0\in$($UW${\scriptsize \it{E}}). Thus $A-\lambda_0 I$ is invertible, which contradicts with $\lambda_0\in\sigma_{CI}(A)$.

Assume secondly that $A-\lambda_0 I$ is surjective but not injective.
We have $n(B-\lambda_0 I)=0$ and $n(A-\lambda_0 I)= d(B-\lambda_0 I)<\infty$.
It concludes that $A-\lambda_0 I$ and $B-\lambda_0 I$ are both Fredholm and $M_0-\lambda_0 I$ is Weyl. For $M_0\in$($UW${\scriptsize \it{E}}), $A-\lambda_0 I$ is Browder, and furthermore invertible, a contradiction.

For the sufficiency, it suffices to show the set $\Omega_1(A; B)$ is empty.
Otherwise, assume $\lambda_0\in\Omega_1(A; B)$. By means of Lemma 5.2, there is some $C$ such that $\lambda_0\in\sigma_a(M_C)\backslash\sigma_{ab}(M_C)$ and $\hbox{ind}(M_C-\lambda_0 I)<0$. And thus, $M_C-\lambda_0 I$ is Browder for condition 2), a desired contradiction.
Thus a desired result emerges.
\end{proof}

A smarter conclusion can be obtained as follows.

\begin{thm} For an operator pair $(A, B)\in(B(H),B(K))$, $M_C\in$($UW${\scriptsize \it{E}}) for any $C\in B(K, H)$ if and only if the following conditions hold:

$1)$\ $M_0\in$($UW${\scriptsize \it{E}});

$2)$\ $\Omega(A; B)=\emptyset$.
 \end{thm}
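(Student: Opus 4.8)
The plan is to deduce Theorem 5.7 from Theorem 5.4. Since the hypothesis $M_0\in$($UW${\scriptsize \it{E}}) occurs on both sides of each characterization, it suffices to prove that the two side conditions are equivalent, namely that
$\Omega(A;B)=\emptyset$ if and only if $\Omega_1(A;B)\cup\Omega_2(A;B)=\emptyset$. I expect this equivalence to hold with no reference to $M_0\in$($UW${\scriptsize \it{E}}); the latter is merely carried along from Theorem 5.4. Granting the equivalence, both implications of Theorem 5.7 follow instantly: if $M_0\in$($UW${\scriptsize \it{E}}) and $\Omega(A;B)=\emptyset$, then $\Omega_1(A;B)\cup\Omega_2(A;B)=\emptyset$, so Theorem 5.4 gives $M_C\in$($UW${\scriptsize \it{E}}) for all $C$; conversely $M_C\in$($UW${\scriptsize \it{E}}) for all $C$ yields $M_0\in$($UW${\scriptsize \it{E}}) and $\Omega_1(A;B)\cup\Omega_2(A;B)=\emptyset$ via Theorem 5.4, whence $\Omega(A;B)=\emptyset$.

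First I would establish the inclusion $\Omega_1(A;B)\cup\Omega_2(A;B)\subseteq\Omega(A;B)$, which already gives $\Omega(A;B)=\emptyset\Rightarrow\Omega_1(A;B)\cup\Omega_2(A;B)=\emptyset$. This is a direct comparison of the defining conditions: $\rho_{ab}(A)\subseteq\rho_{SF_+}(A)$ forces $\Omega_1(A;B)\subseteq\Omega(A;B)$ (the requirements $d(A-\lambda I)=\infty$ and $N_\lambda(A;B)>0$ are identical), while $\rho_{SF_+}(A)\cap\rho_{K}(A)\subseteq\rho_{SF_+}(A)$ together with the trivial implication $n(A-\lambda I)>0\Rightarrow N_\lambda(A;B)>0$ gives $\Omega_2(A;B)\subseteq\Omega(A;B)$.

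The substantive direction is $\Omega(A;B)\neq\emptyset\Rightarrow\Omega_1(A;B)\cup\Omega_2(A;B)\neq\emptyset$. Fix $\lambda_0\in\Omega(A;B)$, so $A-\lambda_0 I\in\Phi_+(H)$ with $d(A-\lambda_0 I)=\infty$ and $N_{\lambda_0}(A;B)>0$; note $\hbox{ind}(A-\lambda_0 I)=-\infty$. I would split on the ascent. If $\hbox{asc}(A-\lambda_0 I)<\infty$, then $A-\lambda_0 I$ is upper semi-Browder, so $\lambda_0\in\rho_{ab}(A)$, and combined with $d(A-\lambda_0 I)=\infty$ and $N_{\lambda_0}(A;B)>0$ this gives $\lambda_0\in\Omega_1(A;B)$. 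If $\hbox{asc}(A-\lambda_0 I)=\infty$, then since $A-\lambda_0 I\in\Phi_+(H)$ we have $\lambda_0\in\sigma_{ab}(A)$; using the standard identity $\sigma_{ab}(A)=\sigma_{SF_+}(A)\cup\hbox{acc}\,\sigma_a(A)$ and $\lambda_0\notin\sigma_{SF_+}(A)$, this forces $\lambda_0\in\hbox{acc}\,\sigma_a(A)$.

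The crux is this last case, which I would settle with the punctured neighborhood theorem for upper semi-Fredholm operators \cite[XI. 3.11]{fm}: there is $\varepsilon>0$ such that for every $\mu$ with $0<\lvert\mu-\lambda_0\rvert<\varepsilon$ the operator $A-\mu I$ is upper semi-Fredholm and Kato, with $n(A-\mu I)$ constant and $\hbox{ind}(A-\mu I)=\hbox{ind}(A-\lambda_0 I)=-\infty$, hence $d(A-\mu I)=\infty$. Because $\lambda_0\in\hbox{acc}\,\sigma_a(A)$, that constant nullity cannot be $0$ (otherwise $A-\mu I$ would be bounded below throughout the punctured disc, contradicting that $\sigma_a(A)$ accumulates at $\lambda_0$), so $n(A-\mu I)>0$. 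Thus each such $\mu$ lies in $\rho_{SF_+}(A)\cap\rho_{K}(A)$ with $d(A-\mu I)=\infty$ and $n(A-\mu I)>0$, i.e. $\mu\in\Omega_2(A;B)$, giving $\Omega_1(A;B)\cup\Omega_2(A;B)\neq\emptyset$. The main obstacle is precisely this infinite-ascent case: the argument hinges on the constancy and strict positivity of the nullity in the punctured neighborhood and on the preservation of the index (and therefore of $d=\infty$), all of which I would quote from semi-Fredholm perturbation theory. Once the equivalence $\Omega(A;B)=\emptyset\iff\Omega_1(A;B)\cup\Omega_2(A;B)=\emptyset$ is in hand, Theorem 5.4 yields Theorem 5.7 directly.
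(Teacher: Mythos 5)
Your proposal is correct and follows essentially the same route as the paper: both reduce to Theorem 5.4 via the easy inclusion $\Omega_1(A;B)\cup\Omega_2(A;B)\subseteq\Omega(A;B)$, and both handle the converse by applying the punctured neighbourhood theorem at a point of $\Omega(A;B)$ and using the equivalence, for upper semi-Fredholm operators, between finite ascent and non-accumulation of $\sigma_a$ to land either in $\Omega_1$ (finite ascent) or, at nearby points, in $\Omega_2$. Your presentation as an unconditional set identity $\Omega=\emptyset\iff\Omega_1\cup\Omega_2=\emptyset$ is a cleaner packaging of the paper's contradiction argument, but the underlying ideas are identical.
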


\begin{proof} 
It suffices to show the necessity from Theorem 5.4.
Assume on the contrary that $\lambda_0\in\Omega(A; B)$. It follows that $A-\lambda I$ is upper semi-Fredholm and Kato with $d(A-\lambda I)=\infty$ whenever  $0<\lvert \lambda-\lambda_0\rvert$ is small enough. In view of Theorem 5.4, $n(A-\lambda I)=0$, hence $\hbox{asc}(A-\lambda_0 I)<\infty$, which contradicts with $\Omega_1(A; B)=\emptyset$.
The proof is completed.
\end{proof}

\begin{rem} Given $(A,B)\in (B(H), B(K))$, it is light to check whether $M_C\in$($UW${\scriptsize \it{E}}) whenever $C\in B(K, H)$:

$(i)$\ Let $A, B\in B(\ell^2)$ be defined by:
$A(x_1, x_2, x_3, \cdots)=( 0, \frac{x_2}{2},\frac{x_3}{3},\cdots)$ and $B(x_1, x_2, x_3, \cdots)=( 0, x_1, \frac{x_2}{2},\frac{x_3}{3},\cdots)$. One can see $\sigma_a(M_0)\setminus\sigma_{ea}(M_0)=\{\frac{1}{2},\frac{1}{3},\cdots\}=E(M_0)$ and $\Omega(A; B)=\emptyset$. Using Theorem 5.6 one can see
$M_C\in$($UW${\scriptsize \it{E}}) for any $C\in B(\ell^2)$. 

$(ii)$\ Let $A, B\in B(\ell^2)$ be defined by:
$A(x_1, x_2, x_3, \cdots)=(0, x_{1}, 0, x_{2}, \cdots)$ and $B(x_1, x_2, x_3, \cdots)=(x_{2}, x_{4}, x_{6}, \cdots)$. It's clear that  $M_0\in$($UW${\scriptsize \it{E}}) whereas $\Omega(A; B)=\{\lambda\in{\Bbb C}: \lvert \lambda \rvert <1\}$, which follows that  $M_{C_0}\notin$($UW${\scriptsize \it{E}}) for some $C_0\in B(\ell^2)$.
  
 \end{rem}

\subsection*{CRediT authorship contribution statement}

\textbf{Sinan Qiu:} Conceptualization, Formal analysis, Validation, Writing-original draft, Writing-review and editing.
\textbf{Lining Jiang:} Conceptualization, Formal analysis, Validation, Writing-original draft, Writing-review and editing.

\subsection*{Declaration of competing interest}

The authors declare that they have no known competing financial interests or personal
relationships that could have appeared to influence the work reported in this paper.

\subsection*{Data availability}

No data was used for the research described in the article.

\subsection*{Acknowledgments}
We are grateful to the referees for helpful comments concerning this paper.


\begin{thebibliography}{1}

\bibitem{ai} H. Weyl, Über beschränkte quadratische Formen. deren Differenz vollstetig ist, Rend
Circ Mat Palermo. \textbf{27} (1909) 373-392. 

\bibitem{cm} M. Schechter, Invariance of the essential spectrum, Bull Amer. Math. Soc. \textbf{71} (1965) 365-375.

\bibitem{ca} L.-A. Coburn, Weyl’s theorem for nonnormal operators, Michigan Math. J. \textbf{13} (1966) 285-288. 

\bibitem{cb} S.-K. Berberian, An extension of Weyl’s theorem to a class of not necessarily normal operators, Michigan Math. J. \textbf{16} (1969) 273-279. 

\bibitem{cd} V.-I. Istratescu, On Weyl’s spectrum of an operator I, Rev. Roumaine Math. Pures Appl. \textbf{17} (1972) 1049-1059.

\bibitem{ce} R. Harte, W.-Y. Lee, Another note on Weyl’s theorem, Trans. Amer. Math. Soc. \textbf{349} (1997) 2115-2124. 

\bibitem{ba} V. Rakočević, Operators obeying a-Weyl’s theorem, Rev. Roumaine Math. Pures Appl. \textbf{34}(1989) 915-919.

\bibitem{bb} P. Aiena, J.-R. Guillen, Peña P. Property ($R$) for bounded linear operators, Mediterr. J. Math. \textbf{8} (2011) 491-508. 

\bibitem{ay}  Y.-X. Ren, L.-N. Jiang, Y.-Y. Kong, Property ($W${\scriptsize \it{E}}) and topological uniform descent, B. Belg. Math. Soc. Sim. \textbf{29} (2022) 1-17.  

\bibitem{db}  K. Yan, W.-G. Su, X.-C. Fang, On the stability of the spectral properties under commuting perturbations, Filomat. \textbf{30} (2016) 1511-1518.

\bibitem{dc} C.-G. Li, S. Zhu,  Y.-L. Feng, Weyl’s theorem for functions of operators and approximation, Integr. Equ. Oper. Theory. \textbf{67} (2010) 481-497. 

\bibitem{dd} Q.-L. Xin, L.-N. Jiang, Property $(\omega)$ and topological uniform descent, Front. Math. China. \textbf{9} (2014)1411-1426. 

\bibitem{af} S.-N. Qiu, X.-H. Cao, Property ($UW${\scriptsize \it{E}}) for operators and operator matrices, J. Math. Anal. Appl. \textbf{509} (2022) 1-11. 

\bibitem{am} C.-H. Sun, X.-H. Cao, Property ($R_1$) and property ($R$) for upper triangular operator matrices, Preprint.

\bibitem{de} X.-H. Cao, Semi-Fredholm spectrum and Weyl’s theorem for operator matrices. Acta Math. Sin. \textbf{22} (2006)169-178. 

\bibitem{df}  X.-H. Cao, B. Meng, Essential approximate point spectra and Weyl’s theorem for operator matrices, J. Math. Anal. Appl. \textbf{304} (2005) 759-771.

\bibitem{dg} M.-M. Cui, X.-H. Cao, Weyl’s theorem for upper triangular operator matrix and perturbations, Linear and Multilinear A. \textbf{66} (2018) 1299-1310. 

\bibitem{aw}  M. Berkani, M. Kachad, New Browder and Weyl type theorems, Bull Korean Math. Soc. \textbf{52} (2015) 439-452.

\bibitem{ad} W.-B. Gong, D.-G. Han, Spectrum of the products of operators and compact perturbations, Proc. Am. Math. Soc. \textbf{120} (1994) 755-760.

\bibitem{ab} D.-S. Djordjević, Operators consistent in regularity, Publ. Math. Debrecen. \textbf{63} (2003) 175-191. 

\bibitem{aa} X.-H. Cao, H.-J. Zhang, Y.-H. Zhang, Consistent invertibility and Weyl’s theorem. J. Math. Anal. Appl. \textbf{369} (2010) 258-264.

\bibitem{ap} Y.-X. Ren, L.-N. Jiang, Y.-Y. Kong, Consistent invertibility and perturbations of property ($R$), Publ. Math. Debrecen. \textbf{100} (2022) 435-447. 

\bibitem{ae}  G.-J. Hai, A.-L.-T.-C. Chen, Consistent invertibility of upper triangular operator matrices, Linear Algebra Appl. \textbf{455} (2014) 22-31. 

\bibitem{ac} D.-S. Djordjević, B.-P. Duggal, R.-E. Harte, Consistent in invertibility operators and SVEP, Filomat. \textbf{27:5} (2013) 767-776.

\bibitem{az} Q.-L. Xin, L.-N. Jiang, Consistent invertibility and perturbations for property $(\omega)$, Publ. Math. Debrecen. \textbf{90/1-2} (2017) 1-10.

\bibitem{bd} M.-R.-F. Smyth, Fredholm theory in Banach algebras, Banach Center Publications. \textbf{8} (1982) 403-414. 

\bibitem{bc} B. Barnes, Riesz points and Weyl’s theorem, Integral Equ. and Operator Theory. \textbf{34} (1999) 187-196. 

\bibitem{ag} H. Radjavi, P. Rosenthal, Invariant Subspaces. 2nd ed. Dover Publications, Mineola, 2003.

\bibitem{ah} A.-E. Taylor, Theorems on ascent, descent, nullity and defect of linear operators, Math. Ann. \textbf{163} (1966) 18-49.

\bibitem{fm} J.-B. Conway, A course in functional analysis, 2nd edn. Springer, New York, 1990.

\bibitem{ff} S.-N. Qiu, L.-N. Jiang, Fredholm complements of upper triangular operator matrices, Banach J. Math. Anal. \textbf{18} (2024) 30. 


\end{thebibliography}
\end{document}